\newtheorem{theorem}{Theorem}
\newtheorem{lemma}{Lemma}
\newtheorem{pr}{Rule}
\newtheorem*{claim}{Claim}
\def\be{\begin{equation}}
\def\ee{\end{equation}}
\def\ba{\begin{eqnarray}}
\def\ea{\end{eqnarray}}
\newcommand\nn{\nonumber}
\newcommand\q{\quad}
\newcommand{\cq}{\mathcal Q}
\newcommand{\ct}{\mathcal T}
\newcommand{\fo}{\mathfrak{o}}
\newcommand{\fs}{\mathfrak{s}}
\def\nn{\nonumber}
\newcommand{\eqa}{\begin{eqnarray}}
\newcommand{\neqa}{\end{eqnarray}}
\newcommand{\p}{\partial}
\def\f{\frac}
\newcommand{\SU}{\mathrm{SU}}
\def\q{{\quad}}
\begin{document}

\vspace*{-3cm}
\title{Quantum theory from rules on information acquisition}

\author{Philipp Andres H\"ohn}

\address{Vienna Center for Quantum Science and Technology, and Institute for Quantum Optics and Quantum Information, Austrian Academy of Sciences, Boltzmanngasse 3, 1090 Vienna, Austria}

\ead{p.hoehn@univie.ac.at}

\begin{abstract}
We summarise a recent reconstruction of the quantum theory of qubits from rules constraining an observer's acquisition of information about physical systems. This review is accessible and fairly self-contained, focussing on the main ideas and results and not the technical details. The reconstruction offers an informational explanation for the architecture of the theory and specifically for its correlation structure. In particular, it explains entanglement, monogamy and non-locality compellingly from limited accessible information and complementarity. As a by-product, it also unravels new `conserved informational charges' from complementarity relations that characterise the unitary group and the set of pure states. 
\end{abstract}

\section{Introduction}

Why is the physical world described by quantum theory? 
If we wish to sensibly address this question we have to step beyond quantum theory and to consider it within a landscape of alternative theories. This, after all, permits us to ponder about how the world {\it could} have been different, possibly described by modifications of quantum theory.
Such an endeavour forces us to leave the usual textbook formulation of quantum theory -- and everything we take for granted about it -- behind and to develop a more general language that also applies to alternative theories. Ideally, this language should be operational, encompassing the interactions of some observer with physical systems in a plethora of conceivable, physically distinct worlds. 

If we wish to also provide a possible answer to the above question, we then have to find {\it physical} properties of quantum theory that single it out, at least within the given landscape of alternatives. In particular, the goal should be to find an operational justification for the textbook axioms, i.e., ultimately for complex Hilbert spaces, unitary dynamics, tensor product structure for composite systems, Born rule, and so on. The result would be a reconstruction of quantum theory from operational axioms \cite{Hoehn:2014uua,hw,Hardy:2001jk,Dakic:2009bh,masanes2011derivation,chiribella2011informational,Barnum,M2,2008arXiv0805.2770G,Fuchs} and should ideally yield a better understanding of what quantum theory tells us about Nature -- and why it is the way it is.

In this manuscript, we shall review and summarise how the quantum formalism for arbitrarily many qubits can be reconstructed from operational rules restricting an observer's acquisition of information about a set of observed systems \cite{Hoehn:2014uua,hw}. The goal of this summary is to provide a didactical and easily accessible overview over this reconstruction. Its underlying framework is especially engineered for unraveling the architecture of quantum theory and so many reconstruction steps are instructive for understanding the origin of quantum properties. As we shall see, this reconstruction provides a transparent, informational explanation for the structure of qubit quantum theory and especially also for its paradigmatic features such as entanglement, monogamy and non-locality. The approach also produces novel `conserved informational charges', indeed appearing in quantum theory, that turn out to characterise the unitary group and the set of pure states and which might find practical applications in quantum information. 

The premise of the summarised approach is to only speak about information that the observer has access to. It is thus purely operational and survives without any ontological commitments. This approach is inspired, in part, by Rovelli's {\it relational quantum mechanics} \cite{Rovelli:1995fv} and the Brukner-Zeilinger informational interpretation of quantum theory \cite{zeilinger1999foundational,Brukner:2002kx}; this successful reconstruction can be viewed as a completion of these ideas for qubit systems.

The rest of the manuscript is organized as follows. In sec.~\ref{sec_post}, we review the landscape of alternative theories, in sec.~\ref{sec_axioms}, we formulate the operational quantum axioms, in sec.~\ref{sec_reconst}, we summarise the key steps of the reconstruction itself and, finally, conclude in sec.~\ref{sec_conc}.

\section{Overview over a landscape of theories}\label{sec_post}

We shall begin with an overview over a landscape of alternative theories which has been developed in \cite{Hoehn:2014uua,hw} to which we also refer for further details.

\subsection{From questions and answers to probabilities and states}\label{sec_states}
Our first aim is to define a notion of state both for a single system and an ensemble of systems.

Consider an observer $O$ who interrogates an ensemble of (identically prepared \cite{Hoehn:2014uua}) systems $\{S_a\}_{a=1}^n$, coming out of a preparation device, with binary questions $Q_i$ from some set $\cq$. For example, in the case of quantum theory, such a question could read ``is the spin of the electron up in $x$-direction?." This set $\cq$ shall only contain {\it repeatable} questions in the sense that $O$ will receive $m\in\mathbb{N}$ times the same answer whenever asking any $Q_i\in\cq$ $m$ times in immediate succession to a single system $S_a$. We shall assume any $S_a$ to always give a definite answer if asked some $Q_i\in \cq$ which moreover is not independent of $S_a$'s preparation. Accordingly, $\cq$ can only contain physically implementable questions which are `answerable' by the $\{S_a\}$ and not arbitrary logically conceivable binary questions. Also, since we assume definite answers we do not address the measurement problem. The answers to the $Q_i\in\cq$ given by the $\{S_a\}$ shall follow a specific statistics for each way of preparing the $\{S_a\}$ (for $n$ sufficiently large). The set of all the possible answer statistics for all $Q_i\in\cq$ for all preparations is denoted by $\Sigma$.

$O$, being a good experimenter, has developed, through his experiments, a theoretical model for $\cq$ and $\Sigma$ which he employs to interpret the outcomes of his interrogations (and to decide whether a question is in $\cq$ or not). This permits $O$ to assign, for the next $S_a$ to be interrogated, a prior probability $y_i$ that $S_a$'s answer to $Q_i\in\cq$ will be `yes'. Namely, $O$ determines $y_i$ through a belief updating (in a broadly Bayesian spirit\footnote{We add `broadly' here as we also consider the typical laboratory situation of an ensemble of systems.}) according to his model of $\Sigma$, any prior information on the way of preparation, and possibly to the frequencies of `yes' answers to questions from $\cq$ which he may have recorded in previous interrogation runs on systems identically prepared to $S_a$. In particular, $O$ may also not have carried out previous interrogations on systems identically prepared to $S_a$ (e.g., if the ensemble contains only the single $S_a$) in which case he will estimate the prior $y_i$ for the single $S_a$ solely according to his model of $\Sigma$ and any prior information about the preparation. (More on this and update rules will be discussed in secs.~\ref{sec_noinfo} and \ref{sec_indep}.)

While $\cq$ need not necessarily contain {\it all} binary measurements that $O$ could, in principle, perform on the $\{S_a\}$, we shall assume that $\cq$ is `tomographically complete' in the sense that the $\{y_i\}_{\forall\,Q_i\in\cq}$ are sufficient to compute the probabilities for all other physically realizable measurements possibly not contained in the $\cq$ too. Hence, the $y_i$ encode everything $O$ could possibly say about the future outcomes to arbitrary experiments on the $\{S_a\}$ in his laboratory. It will therefore be sufficient to henceforth restrict $O$ to acquire information about the $S_a$ solely through the $Q_i\in\cq$. It is also natural to identify $O$'s `catalogue of knowledge' about the given $S_a$, i.e.\ the collection of $\{y_i\}_{\forall\,Q_i\in\cq}$, with {\it the state of $S_a$ relative to $O$}. This is a state of information and an element of $\Sigma$. Conversely, any element in $\Sigma$ assigns a probability $y_i$ to all $Q_i\in\cq$. Thus, we identify $\Sigma$ with the {\it state space} of $S_a$.

The state $\{y_i\}_{\forall\,Q_i\in\cq}$ is the prior state for the single $S_a$ to be interrogated next but also coincides with the state $O$ assigns to the ensemble $\{S_a\}$ (which may only contain a single member) given that its members are identically prepared \cite{Hoehn:2014uua}.

\subsection{Time evolution of $O$'s `catalogue of knowledge'}

We permit $O$ to subject the $\{S_a\}$ to interactions which cause a state $\{y_i(t_0)\}_{\forall\,Q_i\in\cq}$ at time $t_0$ to evolve in time to another legitimate state. Any permitted time evolution shall be temporally translation invariant, thus defining a one-parameter map $T_{\Delta t}(\{y_i(t_0)\}_{\forall\,Q_i\in\cq})=\{y_i(t_0+\Delta t)\}_{\forall\,Q_i\in\cq}$ from $\Sigma$ to itself which only depends on the time interval $\Delta t$ but not on $t_0$. We denote by $\ct$ the set of all time evolutions to which we allow $O$ to expose the $\{S_a\}$.

Clearly, $\ct$ is a further crucial ingredient of $O$'s world model; his model for describing his interrogations with the $\{S_a\}$ is thus encoded in the triple $(\cq,\Sigma,\ct)$.

\subsection{Convexity and state of no information }\label{sec_noinfo}

It will be our challenge to unravel what $O$'s world model is. This requires us to subject the triple $(\cq,\Sigma,\ct)$ to a number of further operational conditions that are `natural' in the context of information acquisition with a broadly Bayesian spirit. Upon imposing the quantum postulates, this will turn out to restrict $\cq$ and $\ct$ to incorporate only a `natural' subset of all possible quantum measurements and time evolutions, namely projective binary measurements and unitaries, respectively (rather than arbitrary positive operator-valued measures (POVMs) and completely positive maps). But this suffices for our purposes to reconstruct the textbook quantum formalism. 

To account for the possibility of randomness in the method of preparation, we assume $\Sigma$ to be convex.  Consider a collection of identical systems (i.e., with identical $(\cq,\Sigma,\ct)$) that are not necessarily in identical states and for which $O$ uses a cascade of biased coin tosses to decide which system to interrogate. Then $O$ is enabled to assign a single prior state to this collection which is a convex combination of their individual states.

Next, we assume the existence of a special method of preparation which generates even completely random answer statistics over all $Q_i\in\cq$. This preparation is described by a special state in $\Sigma$, namely $y_i=\f{1}{2}$, $\forall\,Q_i\in\cq$, and shall be called the {\it state of no information}. This distinguished state is a constraint on the {\it pair} $(\cq,\Sigma)$\footnote{E.g., in quantum theory $(\{\text{binary POVMs}\},\{\text{density matrices}\})$ does not satisfy this condition because there exist inherently biased POVMs, while $(\{\text{projective binary measurements}\},\{\text{density matrices}\})$ does.} and plays two crucial roles: it defines (1) the prior state of $S_a$ that $O$ will start with in a Bayesian updating when he has no `prior information' about the $\{S_a\}$ (except what his model $(\cq,\Sigma,\ct)$ is); and (2) an unambiguous notion of {\it(in-)dependence} of questions (cf.\ sec.~\ref{sec_indep}) which otherwise would be state dependent.\footnote{E.g., in quantum theory, the questions $Q_{x_1}=$``Is the spin of qubit 1 up in $x$-direction?'' and $Q_{x_2}=$``Is the spin of qubit 2 up in $x$-direction?" are independent relative to the completely mixed state, however, not relative to a state with entanglement in $x$-direction.}

\subsection{State updating and (in)dependence and compatibility of questions}\label{sec_indep}

There are two kinds of state update rules, one for the state of the ensemble $\{S_a\}$ (which coincides with the prior state assigned to the next $S_a$ to be interrogated) and one for the posterior state of a given ensemble member $S_a$. In a {\it single shot interrogation}, $O$ receives a single $S_a$, assigns a prior state to it according to his prior information (cf.\ sec.~\ref{sec_states}), interrogates it with some questions from $\cq$ (without intermediate re-preparation) and, depending on the answers, updates the prior to a posterior state valid for this specific $S_a$ only. This requires a consistent {\it posterior state update rule} which permits $O$ to update the probabilities $y_i$ for all $Q_i\in\cq$ in a manner that respects the structure of $\Sigma$ and the repeatability of questions (i.e., an answer $Q_i=$ `yes' or `no' must have a posterior $y_i=1$ or $0$ as a consequence, respectively). This is also a belief updating, but about the single $S_a$ and is {\it not} the same as in secs.~\ref{sec_states} and \ref{sec_noinfo}. Specifically, the posterior state of $S_a$ may differ significantly from its prior state if $O$ has experienced an information gain on at least some $Q_i\in\cq$. (This will necessarily happen when complementary questions are involved, see below.) This is the `collapse' of the state: it is merely $O$'s update of information about the specific $S_a$ \cite{Hoehn:2014uua}.  

By contrast, in a {\it multiple shot interrogation}, $O$ carries out a single shot interrogation on each member of an entire (identically prepared \cite{Hoehn:2014uua}) ensemble $\{S_a\}$ to do {\it ensemble state tomography} and estimate the state of the ensemble from his prior information about the preparation and the collection of posterior states from the single shot interrogations. With every further interrogated $S_a$, $O$ updates the ensemble state -- which coincides with the prior state of the next system from the ensemble to be interrogated. Accordingly, this requires a {\it prior state update rule}. This {\it is} the belief updating alluded to in secs.~\ref{sec_states} and \ref{sec_noinfo} about the ensemble $\{S_a\}$.

It will not be necessary to specify these two update rules in detail; we just assume $O$ uses consistent ones. Specifically, given a posterior state update rule, we shall call $Q_i,Q_j\in\cq$
\begin{description}
\item[(maximally) independent] if, after having asked $Q_i$ to $S$ in the state of no information, the posterior probability $y_j=\f{1}{2}$. That is, if the answer to $Q_i$ relative to the state of no information tells $O$ `nothing' about the answer to $Q_j$.
\item[dependent] if, after having asked $Q_i$ to $S$ in the state of no information, the posterior probability $y_j\neq\f{1}{2}$. (If $y_j=0$ or $1$ they are maximally dependent.) That is, if the answer to $Q_i$ relative to the state of no information gives $O$ at least partial information about the answer to $Q_j$. 
\item[(maximally) compatible] if $O$ may know the answers to both $Q_i,Q_j$ simultaneously, i.e.\ if there exists a state in $\Sigma$ such that $y_i,y_j$ can be simultaneously $0$ or $1$.

\item[(maximally) complementary] if every state in $\Sigma$ which features $y_i=0,1$ necessarily implies $y_j=\f{1}{2}$. 
Notice that complementarity implies independence (but not vice versa).
\end{description}
(One can also define partial compatibility similarly \cite{Hoehn:2014uua}.) These relations shall be symmetric; e.g.\ $Q_i$ is independent of $Q_j$ if and only if $Q_j$ is independent of $Q_i$, etc.

We impose a final condition on the posterior state update rule: if $Q_i,Q_j$ are maximally compatible and independent then asking $Q_i$ shall not change $y_j$, i.e.\ $O$'s information about $Q_j$. 

\subsection{Informational completeness}

The fundamental building blocks of the theories in the landscape which we are constructing are to be sets of pairwise independent questions. This will help to render the convoluted parametrization of a state by $\{y_i\}_{\forall\,Q_i\in\cq}$ more economical. Consider a set of pairwise independent questions $\cq_{M}:=\{Q_1,\ldots,Q_D\}$; it is called {\it maximal} if no question from $\cq\setminus\cq_M$ can be added to $\cq_M$ without destroying pairwise independence of its elements. We shall assume that any maximal $\cq_M$ is {\it informationally complete} in the sense that {\it all} $\{y_i\}_{\forall\,Q_i\in\cq}$ can be computed from the corresponding probabilities $\{y_i\}_{i=1}^D$ for all states in $\Sigma$. Any such $\cq_M$ features $D$ elements \cite{Hoehn:2014uua} such that $\Sigma$ becomes a $D$-dimensional convex set and states become vectors
\ba
\vec{y}=\left(\begin{array}{c}y_1 \\y_2 \\\vdots \\y_D\end{array}\right)\nn.
\ea


\subsection{Information measure}

Our focus is $O$'s acquisition of information, so we need to quantify $O$'s information about the systems. Since $Q_i\in\cq$ is binary, we quantify $O$'s information about $S_a$'s answer to it by a function $\alpha(y_i)$ with $0\leq\alpha(y_i)\leq 1$ \texttt{bit} and $\alpha(y)=0$ \texttt{bit} $\Leftrightarrow$ $y=\f{1}{2}$ and $\alpha(1)=\alpha(0)=1$ \texttt{bit}. $O$'s total information about a $S_a$ must be a function of the state; we make an additive ansatz
\ba
I(\vec{y}):=\sum_{i=1}^D\,\alpha(y_i)\label{infmeas}.
\ea
The quantum postulates will single out the specific function $\alpha$. 

Consider a set $\{Q_1,\ldots,Q_n\}$ of mutually (maximally) complementary questions. It is clear that whenever $O$ has maximal information $\alpha(y_i)=1$ \texttt{bit} about $Q_i$ from this set, he must have $0$ \texttt{bits} of information about all other questions in the set. We require more generally that such a set cannot support more than $1$ \texttt{bit} of information, regardless of the state
\ba
\alpha(y_1)+\cdots+\alpha(y_n)\leq 1 \,\,\texttt{bit}\label{compstrong}
\ea
for otherwise $O$ could, for some states, {\it reduce} his total information about such a set by asking another question from it. These {\it complementarity inequalities} represent informational uncertainty relations that describe how the information gain about one question enforces an information loss about questions complementary to it (see also the state `collapse' in sec.~\ref{sec_indep}).

\subsection{Composite systems and (classical) rules of inference}\label{sec_comp}

$O$ must be able to tell a composite system apart into its constituents purely by means of the information accessible to him through interrogation and thus ultimately by means of the question sets. Let systems $S_{A},S_{B}$ have question sets $\cq_{A},\cq_{B}$. It is then natural to say that they define a composite system $S_{AB}$ if any $Q_a\in\cq_A$ is maximally compatible with any $Q_b\in\cq_B$ and if
\ba
\cq_{AB}=\cq_A\cup\cq_B\cup\tilde{\cq}_{AB},
\label{composite}
\ea
where $\tilde{\cq}_{AB}$ only contains composite questions which are iterative compositions, $Q_a\,*_{\tiny1}\,Q_b, Q_a\,*_2(Q_{a'}*_3Q_b), (Q_a*_4Q_b)*_5Q_{b'}, (Q_a*_6Q_b)*_7(Q_{a'}*_8Q_{b'}),\ldots$, via some logical connectives $*_1,*_2,*_3,\cdots$, of individual questions $Q_a,Q_{a'},\ldots\in\cq_A$ about $S_A$ and $Q_b,Q_{b'},\ldots\in\cq_B$ about $S_B$. This definition is extended recursively to composite systems with more than two subsystems.

Since $O$ can never test the truthfulness of statements about logical connectives of complementary questions through interrogations and since all propositions must have operational meaning, we shall permit $O$  to logically connect two (possibly composite) questions {\it directly} with some $*$ only if they are compatible. For the same reason, $O$ is allowed to apply classical rules of inference (in terms of Boolean logic) exclusively to sets of {mutually compatible} questions.

We stress that this definition of composite systems is distinct from the usual state tensor product rule in generalized probabilistic theories coming from local tomography \cite{Hardy:2001jk,masanes2011derivation,Dakic:2009bh}. In particular, this composition rule admits non-locally tomographic composites (see sec.~\ref{sec_3d}).

\subsection{Computing probabilities and questions as vectors}\label{sec_prob}

Thanks to informational completeness, the probability function $Y(Q|\vec{y})\in[0,1]$ that $Q=$`yes', given the state $\vec{y}$, exists for all $Q\in\cq$ and $\vec{y}\in\Sigma$. As shown in \cite{hw}, the exhibited structure yields
\ba
Y(Q|\vec{y})=Y(\vec{q}|\vec{y})=\f{1}{2}\left(\vec{q}\cdot(2\vec{y}-\vec{1})+1\right),\label{ansatz}
\ea
where $\vec{q}\in\mathbb{R}^D$ is a {\it question vector} encoding $Q\in\cq$ and $\vec{1}$ is a vector with each coefficient equal to $1$ in the basis corresponding to $\cq_M$. This equation gives rise to (part of) the Born rule.

Suppose $Q,Q'\in\cq$ were both encoded by the same $\vec{q}$. Then, by (\ref{ansatz}), they would be probabilistically indistinguishable and $O$ must view them as logically equivalent. $O$ is free to remove any such redundancy from his description of $\cq$ upon which every permissible question vector $\vec{q}$ will encode a {\it unique} $Q\in\cq$. Finally, for every $Q\in\cq$ there exists a state $\vec{y}_Q$ which is the updated posterior state of $S_a$ after $O$ received a `yes' answer to the single question $Q$ from $S_a$ in the (prior) state of no information. $O$ had $0$ \texttt{bits} of information before and $\vec{y}_Q$ encodes a single independent question answer, so we naturally require that it encodes $1$ independent \texttt{bit}. Hence, for every $Q\in\cq$ there exists $\vec{y}_Q\in\Sigma$ with $I(\vec{y}_Q)=1$ \texttt{bit} such that $Y(Q|\vec{y}_Q)=1$.\footnote{In quantum theory, the $\vec{y}_Q$ will only turn out to be pure states for a single qubit; e.g., for two qubits and $Q=$ `is the spin of qubit 1 up in $z$-direction?', represented by the rank-two projector $P_{z_1}=\f{1}{2}(\mathds{1}+\sigma_z\otimes\mathds{1}_{2\times2})$, $\vec{y}_Q$ corresponds to the mixed state $\rho_{z_1}=\f{1}{4}(\mathds{1}+\sigma_z\otimes\mathds{1}_{2\times2})$. Clearly, $\tr(P_{z_1}\rho_{z_1})=1$.}



\section{The quantum principles as rules constraining $O$'s information acquisition}\label{sec_axioms}

In the sequel, we consider the most elementary of information carriers. Within the introduced landscape of theories, we now establish rules on $O$'s acquisition of information that single out the quantum theory of a composite system $S_N$ of $N\in\mathbb{N}$ qubits, modelled in our language by a triple $(\cq_N,\Sigma_N,\ct_N)$. Effectively, these rules constitute a set of `coordinates' for quantum theory on this landscape. The rules are spelled out first colloquially, then mathematically and are motivated in more detail in \cite{Hoehn:2014uua,hw}. 

Empirically, the information accessible to an experimenter about (characteristic properties of) elementary systems is limited. For example, an experimenter may know one binary proposition about an electron (e.g., its spin in $x$-direction), but nothing fully independent of it (and similarly for a classical bit). We shall characterize a composition of $N$ elementary systems according to how much information is, in principle, simultaneously available to $O$. 

\begin{pr}\label{lim}{\bf(Limited Information)}
\emph{``The observer $O$ can acquire maximally $N\in\mathbb{N}$ {\it independent} \texttt{bits} of information about the system $S_N$ at any moment of time.''} \\
There exists a maximal set $Q_i$, $i=1,\ldots,N$, of $N$ mutually maximally independent and compatible questions in $\cq_N$. 
\end{pr}
$O$ can thereby distinguish maximally $2^N$ states of $S_N$ in a single shot interrogation.

But, empirically, elementary systems admit more independent propositions than what -- due to the information limit -- they are able to answer at a time. This is Bohr's complementarity. The unanswered properties must be random (and so `in superposition') because the information limit makes it impossible to ascribe definite outcomes to them. For example, an experimenter may also inquire about the spin of the electron in $y$-direction. Yet doing so is at the total expense of his information about its spin in the $x$- and $z$-directions and subsequent such measurements have random outcomes. For the $N$ elementary systems, we assert the existence of complementarity.
\begin{pr}\label{unlim}{\bf(Complementarity)}
\emph{``The observer $O$ can always get up to $N$ \emph{new} independent \texttt{bits} of information about the system $S_N$. But whenever $O$ asks $S_N$ a new question, he experiences no net loss in his total amount of information about $S_N$.''}\\
There exists another maximal set $Q_i'$, $i=1,\ldots,N$, of $N$ mutually maximally independent and compatible questions in $\cq_N$ such that $Q'_i,Q_i$ are maximally complementary and $Q'_i,Q_{j\neq i}$ are maximally compatible.
\end{pr}
The peculiar mathematical form of rule \ref{unlim} becomes intuitive upon recalling that $S_N$ is a {\it composite} system such that complementarity should exist {\it per} elementary system \cite{Hoehn:2014uua}. 

Rules \ref{lim} and \ref{unlim} are conceptually inspired by (non-technical) proposals made by Rovelli \cite{Rovelli:1995fv} and Zeilinger and Brukner \cite{zeilinger1999foundational,Brukner:2002kx}. These rules say nothing about what happens in-between interrogations. Naturally, we demand $O$ not to gain or lose information {\it without} asking questions.

\begin{pr}\label{pres}{\bf(Information Preservation)}
\emph{``The total amount of information $O$ has about (an otherwise non-interacting) $S_N$ is preserved in-between interrogations."}\\
$I(\vec{y})$ is \emph{constant} in time in-between interrogations for (an otherwise non-interacting) $S_N$.
\end{pr}
Hence, $O$'s total information $I(\vec{y})$ is a `conserved charge' of any time evolution $T_{\Delta t}\in\ct_N$.


The more interactions to which $O$ may subject $S_N$ are available, the more ways in which any state may, in principle, change in time and thus the more `interesting' $O$'s world. We therefore demand that {\it any} time evolution is physically realizable as long as it is consistent with the other rules. (Since $\Sigma_N,\ct_N$ are interdependent, this is distinct from `maximizing the number' of states.) 

\begin{pr}\label{time}{\bf(Time Evolution)}\footnote{If we did not require this `maximality' of $\ct_N$, we would still ultimately obtain a linear, unitary evolution, but not necessarily the full unitary group. This is the sole reason for demanding `maximality'. Note that principles \ref{pres} and \ref{time} are {\it not} equivalent to the axiom of `continuous reversibility' of generalized probabilistic theories \cite{Hardy:2001jk,masanes2011derivation,Dakic:2009bh}.} 
\emph{``$O$'s `catalogue of knowledge' about $S_N$ evolves \emph{continuously} in time in-between interrogations and every consistent such evolution is physically realizable.''}\\
$\ct_N$ is the maximal set of transformations $T_{\Delta t}$ on states such that, for any \emph{fixed} state $\vec{y}$, $T_{\Delta t}(\vec{y})$ is \emph{continuous} in $\Delta t$ and compatible with principles \ref{lim}-\ref{pres} (and the structure of the theory landscape).
\end{pr}

We shall also allow $O$ to ask {\it any} question to $S_N$ which `makes (probabilistic) sense'.

\begin{pr}\label{Q}{\bf(Question Unrestrictedness)}\footnote{Without principle \ref{Q}, we would still obtain the structure of an informationally complete set $\cq_{M_N}$, finding that it encodes a basis of projective Pauli operator measurements \cite{hw}; principle \ref{Q} legalizes {\it all} such measurements.}
\emph{``Every question which yields legitimate probabilities for every way of preparing $S_N$ is physically realizable by $O$.''}\\
Every question vector $\vec{q}\in\mathbb{R}^{D_N}$ which satisfies $Y(\vec{q}|\vec{y})\in[0,1]$ $\forall\,\vec{y}\in\Sigma_N$ and for which there exists $\vec{y}_Q\in\Sigma_N$ with $I(\vec{y}_Q)=1$ \texttt{bit} such that $Y(\vec{q}|\vec{y}_Q)=1$ corresponds to a $Q\in\cq_N$.
 \end{pr}

These five rules turn out to leave two solutions for the triple $(\cq_N,\Sigma_N,\ct_N)$. Remarkably, they cannot distinguish between complex and real numbers. Namely, the two solutions are qubit and rebit quantum theory, i.e.\ two-level systems over real Hilbert spaces \cite{Hoehn:2014uua,hw}. Since the latter is both mathematically and physically a subcase of the former, these five rules can be regarded as sufficient. However, if one also wishes to discriminate rebits operationally, then an extra rule, adapted from \cite{Hardy:2001jk,Dakic:2009bh,masanes2011derivation} and imposed {\it solely} for this purpose (it is partially redundant), succeeds.

\begin{pr}\label{loc}{\bf(Tomographic Locality)}
\emph{``$O$ can determine the state of the composite system $S_N$ by interrogating only its subsystems."}
\end{pr}

As shown in \cite{Hoehn:2014uua,hw}, rules \ref{lim}--\ref{loc} are equivalent to the textbook axioms. More precisely:
\begin{claim}
The only solution to rules \ref{lim}--\ref{loc} is qubit quantum theory where
\begin{itemize}
\item $\Sigma_N\simeq\text{convex hull of }\mathbb{CP}^{2^N-1}$ is the space of $2^N\times2^N$ density matrices over $\mathbb{C}^{2^N}$,
\item states evolve unitarily according to $\ct_N\simeq\rm{PSU}(2^N)$ and the equation describing the state dynamics is (equivalent to) the von Neumann evolution equation,
\item $\cq_{N}\simeq\mathbb{CP}^{2^N-1}$ is (isomorphic to) the set of projective measurements onto the $+1$ eigenspaces of $N$-qubit Pauli operators\footnote{A Hermitian operator on $\mathbb{C}^{2^N}$ is a Pauli operator iff it has two eigenvalues $\pm1$ of equal multiplicity.} and the probability for $Q\in\cq_N$ to be answered with `yes' in some state is given by the Born rule for projective measurements.
\end{itemize}
\end{claim}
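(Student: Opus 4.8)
The plan is to reconstruct the triple $(\cq_N,\Sigma_N,\ct_N)$ step by step, following the strategy already previewed by equations \Ref{ansatz} and the information measure \Ref{infmeas}. First I would fix the dimension $D_N$ of the state space. Rules \ref{lim} and \ref{unlim} furnish two maximal sets of $N$ mutually independent, compatible questions, pairwise complementary across the sets; the combinatorics of complementarity together with the composition rule \Ref{composite} (which forces complementarity to exist per elementary subsystem) should pin down $D_N = 4^N - 1$, matching the number of traceless $N$-qubit Pauli operators. The single-qubit case $N=1$ is the base: here one shows that rules \ref{lim}--\ref{Q} force $\Sigma_1$ to be a Euclidean ball in $\mathbb{R}^3$ (the Bloch ball), by using \Ref{ansatz} to identify questions with vectors, the complementarity inequalities \Ref{compstrong} to constrain the shape of the state space, and Information Preservation plus Time Evolution to force the symmetry group of $\Sigma_1$ to act transitively on pure states — a convex body in $\mathbb{R}^3$ whose symmetry group is large enough must be the ball, giving $\mathrm{SO}(3)\simeq\mathrm{PSU}(2)$.

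Next I would bootstrap from one qubit to $N$ qubits. The key is that the information measure must be quadratic, $\alpha(y)=(2y-1)^2$ up to units, which is singled out by demanding that $I(\vec y)$ in \Ref{infmeas} be conserved under the continuous transformation group of rule \ref{time} while being consistent with the complementarity inequalities on every maximal complementary set; this is where the `conserved informational charges' advertised in the abstract enter. Once $\alpha$ is quadratic, $I(\vec y)=\sum_i(2y_i-1)^2$ is a squared Euclidean norm, so the admissible time evolutions $\ct_N$ are (connected components of) orthogonal transformations of $\mathbb{R}^{D_N}$ preserving $\Sigma_N$; continuity (rule \ref{time}) restricts to the identity component. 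Then one argues that the question set $\cq_N$, built from $\cq_1$ on each factor via \Ref{composite} and closed under the allowed logical connectives on compatible questions, generates exactly the $N$-qubit Pauli group structure, and that rule \ref{Q} promotes these to the full set of rank-$2^{N-1}$ projectors, i.e.\ $\mathbb{CP}^{2^N-1}$. Tomographic Locality (rule \ref{loc}) is invoked only to kill the rebit branch: without it, real Hilbert space two-level systems satisfy everything, and one shows the rebit composite violates local tomography while the qubit one does not.

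The main obstacle, I expect, is the global reconstruction of $\Sigma_N$ as the full set of $2^N\times 2^N$ density matrices — equivalently, showing that the convex body carved out in $\mathbb{R}^{4^N-1}$ by all the complementarity inequalities \Ref{compstrong} (ranging over every maximal mutually complementary set) coincides with the quantum state space, with pure states exactly the image of $\mathbb{CP}^{2^N-1}$. For $N=1$ this is the Bloch ball and is easy, but for $N\geq 2$ the positivity constraints defining density matrices are not pairwise but genuinely global (entanglement, monogamy), and one must show that the informational uncertainty relations, together with transitivity of $\ct_N$ on the pure-state manifold and the requirement that each $\vec y_Q$ from sec.~\ref{sec_prob} be a legitimate state, are jointly strong enough to reproduce exactly the positive-semidefiniteness cone and no larger convex set. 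I would handle this inductively in $N$, using the composite structure \Ref{composite} to express $(\cq_{N},\Sigma_{N},\ct_{N})$ in terms of the $N=1$ and $N-1$ data, and leaning on the detailed arguments of \cite{Hoehn:2014uua,hw}; the remaining verification that the resulting dynamics is the von Neumann equation and the probabilities are the Born rule is then a direct translation of \Ref{ansatz} into operator language.
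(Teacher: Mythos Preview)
Your broad architecture is recognisably close to the paper's, but there are two genuine gaps that would stall the argument as you have sketched it.

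First, the logical order of the $N=1$ and $N=2$ cases is inverted. You propose to settle $D_1=3$ (hence the Bloch ball) already at the single-system level and then bootstrap upward. In the paper this is impossible: rules \ref{lim} and \ref{unlim} applied to a single system only give $D_1\geq 2$. The bound $D_1\leq 3$ (theorem \ref{thm_3d}) is obtained from the $N=2$ compatibility structure --- the $D_1$ ``diagonal'' correlation questions $Q_{ii}$ are pairwise compatible, so rule \ref{lim} forces at most two of them to be independent, and the XNOR dependence $Q_{jj}=Q_{11}\leftrightarrow Q_{22}$ or its negation then caps $D_1$ at $3$. Rule \ref{loc} is subsequently used, still at $N=2$, to discard $D_1=2$ (rebits). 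Only after this is the single-qubit Bloch ball derived.

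Second, your handling of $N\geq 2$ misses the mechanism that actually does the work. You suggest that the complementarity inequalities \Ref{compstrong} taken over all maximal complementary sets, plus transitivity of $\ct_N$, carve out the density-matrix body, and that one can proceed inductively in $N$. Neither is what the paper does, and the first is likely false (the inequalities are necessary but not known to be sufficient). For $N=2$ the paper identifies the six pentagon complementarity sets, shows that on pure states they are \emph{saturated}, $I(\text{Pent}_a)=1$ \texttt{bit} (the ``conserved informational charges''), and then constructs the $15$ independent generators of $\ct_2$ explicitly as information swaps between pentagon pairs, obtaining $\ct_2\simeq\rm{PSU}(4)$; the pure-state manifold is then the $\ct_2$-orbit of a known pure state, and $\Sigma_2$ its convex hull. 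For $N>2$ there is no induction from $N-1$: the paper invokes the universality theorem from quantum computation that single- and two-qubit unitaries generate $\rm{PSU}(2^N)$, together with maximality of $\rm{PSU}(2^N)$ in $\rm{SO}(4^N-1)$, to pin down $\ct_N$, and again obtains $\Sigma_N$ as the convex hull of a group orbit. Your proposal contains no substitute for either the pentagon machinery or the universality input, and these are precisely the steps where the reconstruction has real content.
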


\section{Synopsis of the reconstruction steps and key results}\label{sec_reconst}

Since this gives rise to a {\it constructive} derivation of the explicit architecture of qubit quantum theory, it involves a large number of individual steps compared to the rather abstract reconstructions \cite{Hardy:2001jk,Dakic:2009bh,masanes2011derivation,chiribella2011informational,Barnum,M2,2008arXiv0805.2770G,Fuchs}. However, this is also rewarding as it offers novel informational explanations for typical features of quantum theory and so many reconstruction steps are actually quite instructive. We now provide a summary of key results and reconstruction steps from \cite{Hoehn:2014uua,hw} (to which we refer for technical details) needed for proving the claim of the previous section.  

\subsection{Logical connectives for building informationally complete sets}\label{sec_connect}

The first task is to build informationally complete sets $\cq_{M_N}$ \cite{Hoehn:2014uua}. The conjunction of rules \ref{lim} and \ref{unlim} implies that $\cq_{M_1}=\{Q_1,Q_2,\ldots,Q_{D_1}\}$ for a single elementary system must be a maximal mutually complementary set with $D_1\geq2$. We changed notation slightly, labelling complementary questions by numbers, not primes. Of course, in quantum theory, $D_1=3$; the more involved $N=2$ case will entail this. The structure (\ref{composite}) of a composite system implies that $\cq_{M_2}$ should contain individual questions about its subsystems. Continuing with a slight change of notation, we denote $\cq_{M_1}$ for system 1 by $\{Q_1,Q_2,\ldots,Q_{D_1}\}$ and for system 2 with a prime by $\{Q'_1,Q'_2,\ldots,Q'_{D_1}\}$. 
Apart from these individual questions, $\cq_{M_2}$ should contain composite questions $Q_i*Q'_j$ for some connective $*$. Pairwise independence of $\cq_{M_2}$ enforces that $*$ must satisfy the following truth table, where `yes'$=1$ and `no'$=0$ ($Q_i,Q'_j$ are compatible) \cite{Hoehn:2014uua}:
\begin{eqnarray}
  \begin{tabular}{ ||c | c || c|| }
    \hline
    $Q_{i}$ & $Q'_{j}$ & $Q_i*Q'_j$ \\ \hline\hline
    0 & 1 & a \\ \hline
    1 & 0 & a \\ \hline
    1&1&b\\\hline
    0&0&b\\\hline
  \end{tabular}\q\q\q\q \q\q a\neq b\,\q\q a,b\in\{0,1\}.\label{truth}
\end{eqnarray}
Hence, $*$ is either the XNOR $\leftrightarrow$ (for $a=0$, $b=1$), or its negation, the XOR $\oplus$ (for $a=1$, $b=0$). Up to an overall negation $\neg$, the two connectives are logically equivalent and so we henceforth make the convention to only build up composite questions (for informationally complete sets) using the XNOR. The composite question $Q_{ij}:=Q_i\leftrightarrow Q'_j$ is a `correlation question', representing ``are the answers to $Q_i,Q'_j$ the same?." Ultimately, in quantum theory, $\leftrightarrow$ will turn out to correspond to the tensor product $\otimes$ in $\sigma_i\otimes\sigma_j$ where $\sigma_i$ is a Pauli matrix; $Q_{ij}$ will then correspond to ``are the spins of qubit 1 in $i$- and of qubit 2 in $j$-direction correlated?."

\subsection{Question graphs, independence and compatibility for $N=2$ and entanglement}
It is convenient to represent questions graphically: individual questions are represented as {\it vertices} and bipartite correlation questions as {\it edges} between them. For instance, we may have
\begin{eqnarray}
\psfrag{q1}{\hspace*{-.3cm}\tiny system 1}
\psfrag{q2}{\tiny system 2}
\psfrag{Q1}{$Q_1$}
\psfrag{Q2}{$Q_2$}
\psfrag{Q3}{$Q_3$}
\psfrag{QD}{$Q_{D_1}$}
\psfrag{P1}{$Q'_1$}
\psfrag{P2}{$Q'_2$}
\psfrag{P3}{$Q'_3$}
\psfrag{PD}{$Q'_{D_1}$}
\psfrag{q11}{$Q_{11}$}
\psfrag{q31}{$Q_{31}$}
\psfrag{q22}{$Q_{22}$}
\psfrag{q23}{\hspace*{-.2cm}$Q_{23}$}
\psfrag{qdd}{$Q_{D_1D_1}$}
\psfrag{d}{$\vdots$}
{\includegraphics[scale=.2]{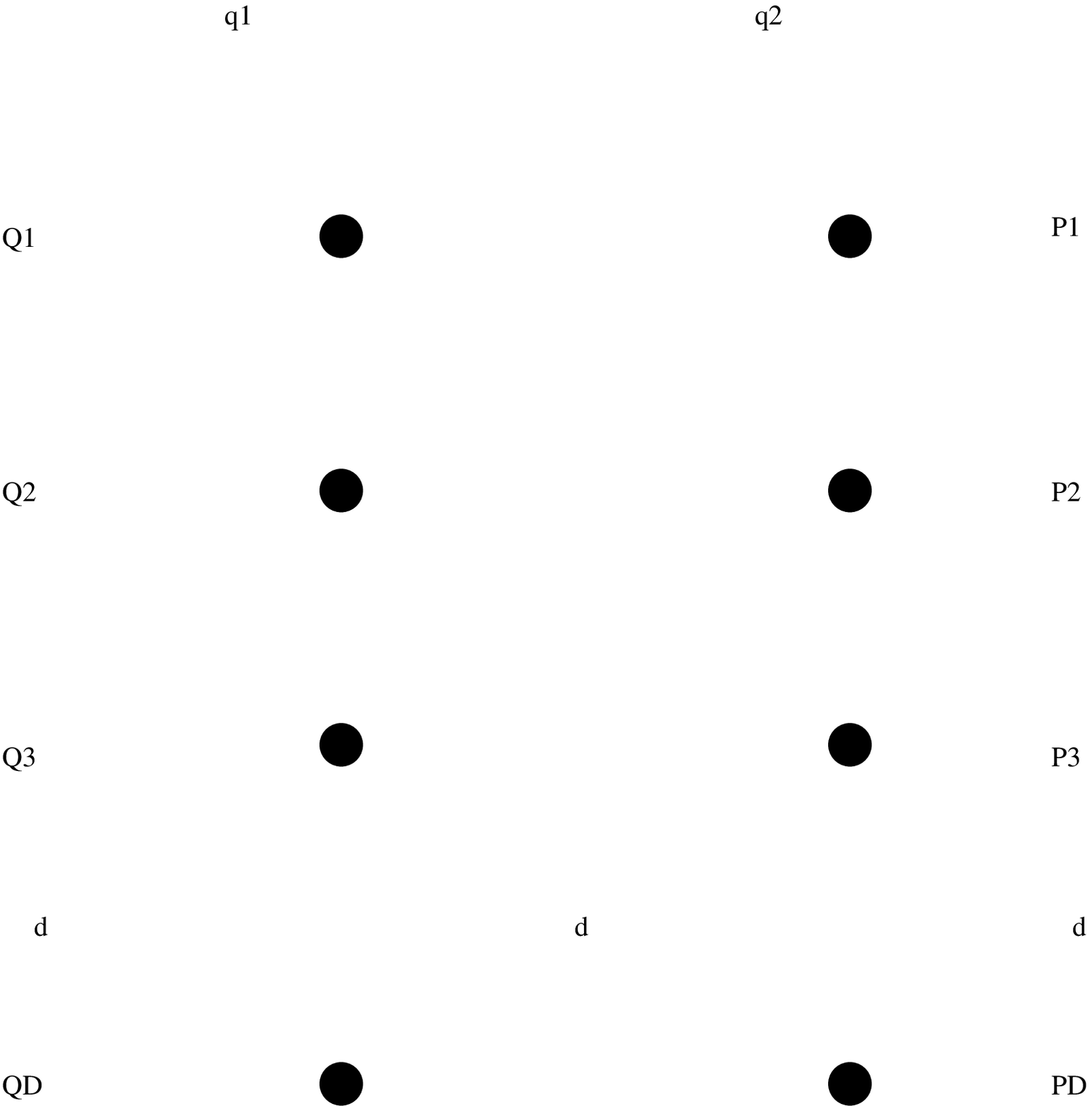}}\q\q\q\q\q\q{\includegraphics[scale=.2]{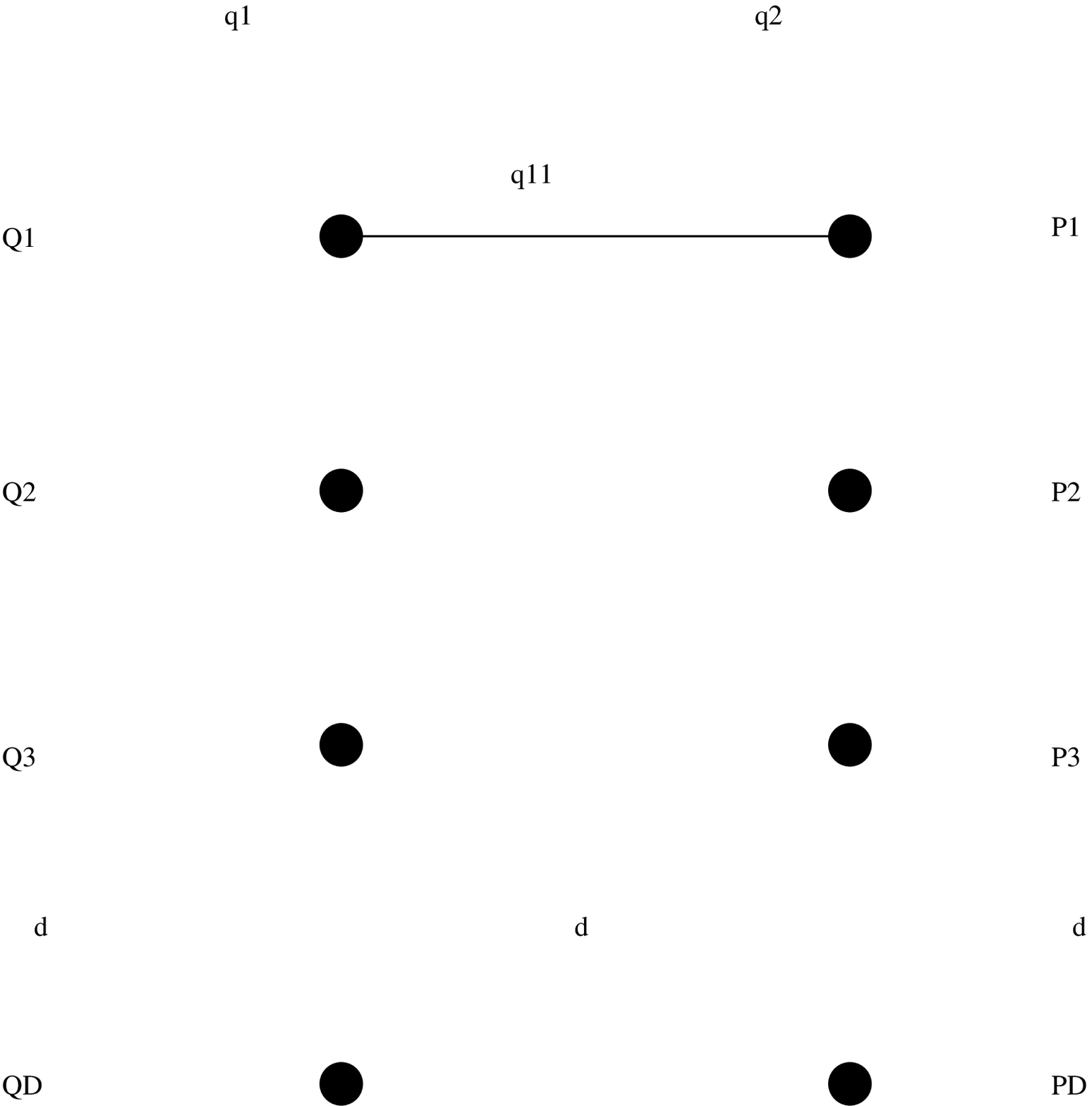}}\q\q\q\q\q\q{\includegraphics[scale=.2]{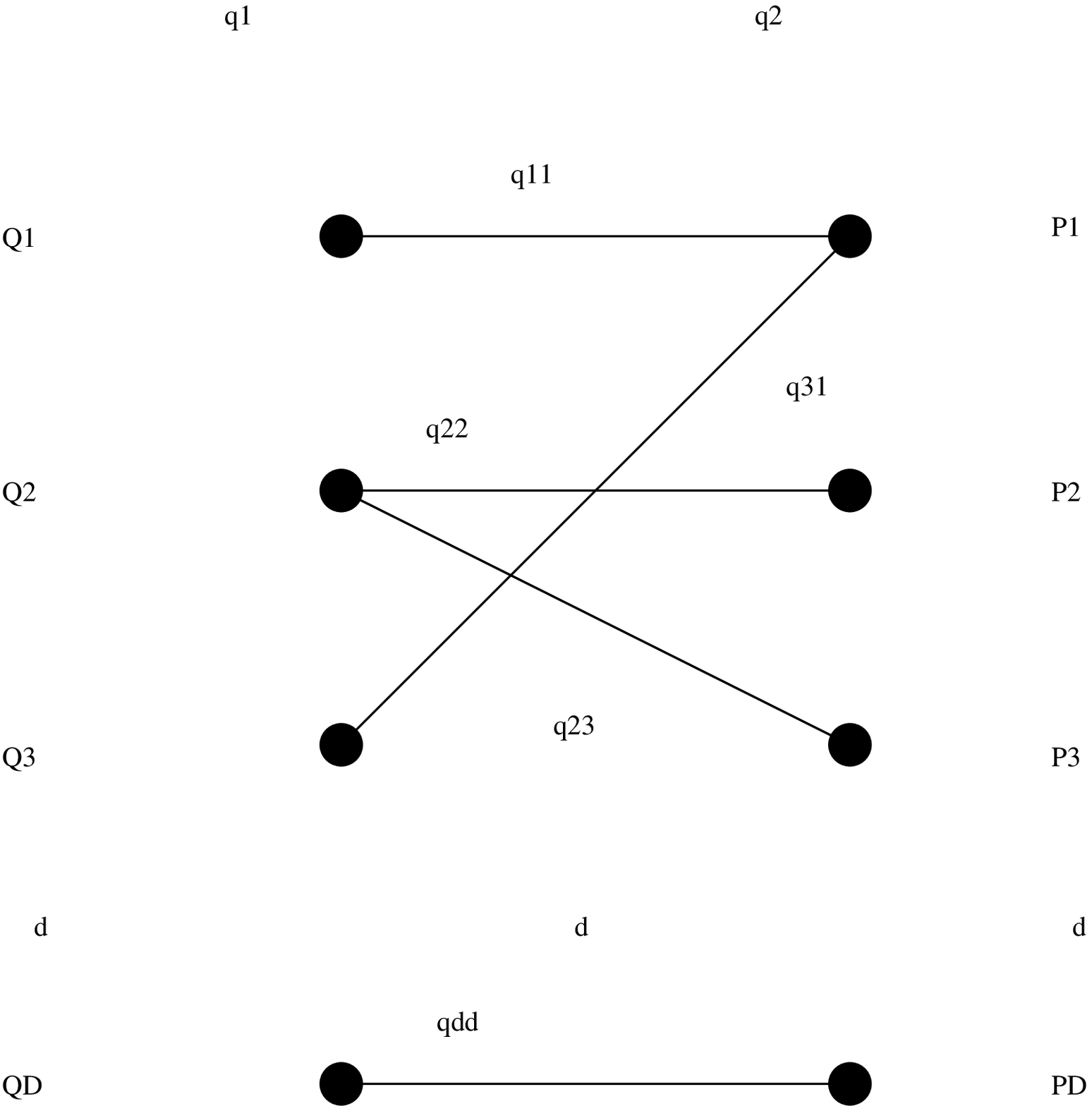}}.\notag
\end{eqnarray}
Since $O$ is only allowed to connect compatible questions logically, there can be no edge between individual questions of the {\it same} system.

Using only rules \ref{lim} and \ref{unlim} and logical arguments, the following result is proven in \cite{Hoehn:2014uua}:
\begin{lemma}\label{lem1}
$Q_i,Q'_j,Q_{ij}$ are pairwise independent for all $i,j=1,\dots,D_1$ and will thus be part of an informationally complete set $\cq_{M_2}$. Furthermore:
\begin{enumerate}
\item $Q_i$ is compatible with $Q_{ij}$, $\forall\,j=1,\ldots,D_1$ and complementary to $Q_{kj}$, $\forall\,k\neq i$ and $\forall\,j=1,\ldots,D_1$. That is, graphically, an individual question $Q_i$ is compatible with a correlation question $Q_{ij}$ if and only if its corresponding vertex is a vertex of the edge corresponding to $Q_{ij}$. By symmetry, the analogous result holds for $Q'_j$.
\item $Q_{ij}$ and $Q_{kl}$ are compatible if and only if $i\neq k$ and $j\neq l$. That is, graphically, $Q_{ij}$ and $Q_{kl}$ are compatible if their corresponding edges do \emph{not} intersect in a vertex and complementary if they intersect in one vertex.
\end{enumerate}
\end{lemma}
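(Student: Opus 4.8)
The plan is to derive the three assertions one after another, using only rules~\ref{lim} and~\ref{unlim}, the definitions of (in)dependence, compatibility and complementarity from sec.~\ref{sec_indep}, the composite-system structure (\ref{composite}), and the truth table (\ref{truth}). The single computational device used throughout is that the XNOR is associative and satisfies $Q\leftrightarrow Q\equiv 1$, which gives the Boolean identities $Q_i\leftrightarrow Q_{ij}\equiv Q'_j$, $\,Q'_j\leftrightarrow Q_{ij}\equiv Q_i$ and $Q_{ij}\leftrightarrow Q_{kl}\equiv(Q_i\leftrightarrow Q_k)\leftrightarrow(Q'_j\leftrightarrow Q'_l)$. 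For pairwise independence: $Q_i$ and $Q_{k\neq i}$ are independent since $\cq_{M_1}$ is a maximal mutually complementary set and complementarity implies independence; $Q_i$ and $Q'_j$ are independent because $Q_i$ is answered by $S_A$ alone, so interrogating $S_A$ with it cannot change $O$'s information about the separate $S_B$ and hence leaves $y'_j=\tfrac12$; and the remaining cases ($Q_i\perp Q_{ij}$, $Q'_j\perp Q_{ij}$, $Q_i\perp Q_{kl}$, and $Q_{ij}\perp Q_{kl}$ for $(k,l)\neq(i,j)$) are checked directly by asking the first question to the state of no information, noting that the posterior then carries a single independent \texttt{bit} so that the Boolean reduction of the second question against the now-known one still contains a complementary (hence $y=\tfrac12$) individual question. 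For instance, once $Q_i=1$ is known, (\ref{truth}) makes $Q_{ij}$ and $Q'_j$ give identical answers while $y'_j=\tfrac12$, so $y_{ij}=\tfrac12$.

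For part~1 I would first establish the compatibilities $Q_i$--$Q_{ij}$ and $Q'_j$--$Q_{ij}$ by exhibiting a joint eigenstate: starting from the state of no information ask $Q_i$, then $Q'_j$; these are compatible by (\ref{composite}) and independent, so by the final condition of sec.~\ref{sec_indep} the second does not disturb $y_i$, and the resulting state has $y_i,y'_j\in\{0,1\}$, whence $y_{ij}\in\{0,1\}$ by (\ref{truth}). The complementarity $Q_i$--$Q_{kj}$ for $k\neq i$ I would prove by contradiction: in the triple $\{Q_i,Q'_j,Q_{kj}\}$ the pair $Q_i$--$Q'_j$ is compatible by (\ref{composite}) and the pair $Q'_j$--$Q_{kj}$ is compatible by the compatibilities just shown; were $Q_i$ also compatible with $Q_{kj}$, the whole triple would be mutually compatible, so $O$ could apply classical inference to it and, via $Q'_j\leftrightarrow Q_{kj}\equiv Q_k$, would simultaneously know $Q_i$ and $Q_k$ --- contradicting their complementarity inside $\cq_{M_1}$. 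Hence $Q_i$ and $Q_{kj}$ are incompatible.

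Part~2 uses the same two tools. For intersecting edges with common vertex $Q_i$, i.e.\ $Q_{ij}$ and $Q_{il}$ with $j\neq l$: in $\{Q_i,Q_{ij},Q_{il}\}$ the pairs $Q_i$--$Q_{ij}$ and $Q_i$--$Q_{il}$ are compatible by part~1, so mutual compatibility would let $O$ infer both $Q'_j\equiv Q_i\leftrightarrow Q_{ij}$ and $Q'_l\equiv Q_i\leftrightarrow Q_{il}$, contradicting the complementarity of $Q'_j,Q'_l$; the case of a common vertex $Q'_j$ is symmetric. For non-intersecting edges $Q_{ij},Q_{kl}$ with $i\neq k$, $j\neq l$, one has to produce a common eigenstate; the way to do this is to complete the structural determination of $\cq_{M_2}$ (which pairs among its $D_2$ pairwise-independent questions are compatible) and then observe that any mutually compatible and independent family of size $2$ admits a joint eigenstate built by sequential questioning from the state of no information, the second question leaving the first undisturbed by the final condition of sec.~\ref{sec_indep}; applied to $\{Q_{ij},Q_{kl}\}$ this gives a state with $y_{ij},y_{kl}\in\{0,1\}$.

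The main obstacle is not the Boolean bookkeeping but the two places where logic alone does not suffice. First, the contradiction arguments only yield that the relevant questions are \emph{incompatible}; upgrading this to \emph{maximal} complementarity (and ruling out partial compatibility between these particular questions) is where one must invoke the maximality built into rules~\ref{lim} and~\ref{unlim}. Second, the compatibility claims are existence statements about $\Sigma_2$: for $Q_i$--$Q_{ij}$ this is cheap because the building blocks $Q_i,Q'_j$ are already known to be compatible from the composite axiom, but for two non-intersecting correlation questions there is no such ground-truth compatible pair to start from, so one genuinely needs the full compatibility structure of $\cq_{M_2}$ together with the fact that sequential questioning turns a maximal compatible-and-independent family into an explicit joint state. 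Establishing $Q_i\perp Q'_j$ cleanly --- the statement that interrogating one subsystem disturbs nothing about the other --- is a further, more elementary but not vacuous point that must be tied carefully to the composite-system definition, the state of no information, and the posterior update rule.
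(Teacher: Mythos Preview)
The paper does not actually prove Lemma~\ref{lem1}; it only announces that the result is established in \cite{Hoehn:2014uua} ``using only rules~\ref{lim} and~\ref{unlim} and logical arguments,'' so there is no in-paper argument to compare against. Your plan follows the same spirit advertised there, and the Boolean reductions together with the contradiction arguments for incompatibility are the right skeleton.

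The two obstacles you flag are genuine, and your handling of the second is where the plan becomes circular. For the compatibility of non-intersecting correlation questions $Q_{ij},Q_{kl}$ (with $i\neq k$, $j\neq l$) you propose to first ``complete the structural determination of $\cq_{M_2}$'' and then build a joint eigenstate by sequential questioning. But the compatibility of $Q_{ij}$ and $Q_{kl}$ \emph{is} part of that structural determination, and the definition of compatibility is precisely the existence of such a joint state --- so you need the state to establish the compatibility, not the other way around. Unlike the $Q_i$--$Q_{ij}$ case there is no product-state shortcut here, since any state with both $Q_i,Q_k$ (or both $Q'_j,Q'_l$) definite is excluded by complementarity inside $\cq_{M_1}$; the required state is genuinely entangled, and its existence must be forced from rules~\ref{lim} and~\ref{unlim} rather than assumed. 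Your first obstacle is likewise not cosmetic: the landscape explicitly allows partial compatibility, so the contradiction arguments as written only show that no state has $y_i,y_{kj}\in\{0,1\}$ simultaneously, not that $y_i\in\{0,1\}$ forces $y_{kj}=\tfrac12$. These two points are exactly where the detailed arguments of \cite{Hoehn:2014uua} carry the load; the rest of your outline is sound.
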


For example, $Q_1$ in the third question graph above is compatible with $Q_{11}$ and complementary to $Q_{22}$, while $Q_{11}$ and $Q_{22}$ are compatible and $Q_{11}$ and $Q_{31}$ are complementary.

This lemma has a striking consequence: it implies {\it entanglement}. Indeed, since, e.g., $Q_{11}$ and $Q_{22}$ are independent and compatible, $O$ may spend his maximally accessible amount of $N=2$ {\it independent} \texttt{bits} of information (rule \ref{lim}) over correlation questions only. Since nonintersecting edges do not share a common vertex, the lemma implies that no individual question is simultaneously compatible with two correlation questions that are compatible. Hence, when knowing the answers to $Q_{11},Q_{22}$, $O$ will be entirely ignorant about the individual questions; $O$ has then maximal information about $S_2$, but purely composite information. This is entanglement in the very sense of Schr\"odinger (\emph{``...the best possible knowledge of a whole does not necessarily include the best possible knowledge of all its parts..."} \cite{schrod}). For example, in quantum theory, a state with $Q_{11}=Q_{22}=$ `yes' will coincide with a Bell state having the spins of qubits 1 and 2 correlated in $x$- and $y$-direction (and anti-correlated in $z$-direction). Of course, there is nothing special about $Q_{11},Q_{22}$ and the argument works similarly for other composite question pairs and can be extended also to states with non-maximal entanglement (see \cite{Hoehn:2014uua} for details).

For systems with limited information content, {\it entanglement is therefore a direct consequence of complementarity}; without it there would be no independent and compatible composite questions sufficient to saturate the information limit \cite{Hoehn:2014uua}. For instance, two classical bits satisfy rule \ref{lim} as well, but admit no complementarity so that $\cq_{M_2}^{\rm cbit}=\{Q_1,Q'_1,Q_{11}\}$ and the maximum amount of $N=2$ {\it independent} \text{bits} can{\it not} be spent on composite questions only.

 \begin{wrapfigure}{o}{0.23\textwidth}
  \vspace{-15pt}
  \begin{center} 
\psfrag{a}{\hspace{-.1cm}$S_A$}
 \psfrag{b}{$S_B$}
 \psfrag{c}{$S_C$}
\includegraphics[scale=.2]{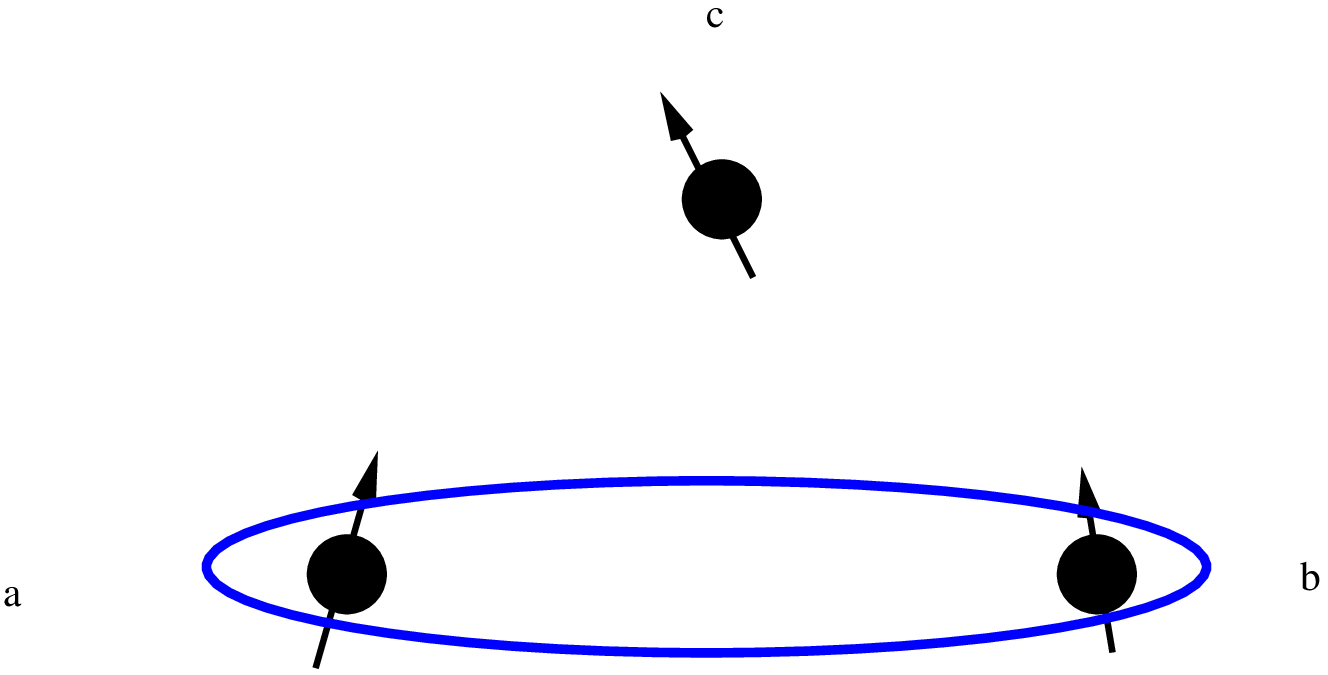}
\end{center}  \vspace{-20pt}
\end{wrapfigure}
We also note that rules \ref{lim} and \ref{unlim} offer a simple, intuitive explanation for {\it monogamy of entanglement}.
Consider, for a moment $N=3$ elementary systems $S_A,S_B,S_C$, and suppose $S_A$ and $S_B$ are maximally entangled (say, because $O$ received the answer $Q_{11}=Q_{22}=$ `yes' from $S_{AB}$). Noting that $S_{AB}$ is a composite bipartite system inside the tripartite $S_{ABC}$, $O$ has then already spent his maximal amount of information of $N=2$ {\it independent} \texttt{bits} which he may know about $S_{AB}$ and can therefore not know anything else that is independent, incl.\ non-trivial correlations with $S_C$, about the pair. To saturate the $N=3$ {\it independent} \text{bit} limit for the tripartite system $S_{ABC}$, he may then only inquire individual information about $S_C$. This is {\it monogamy} in its extreme form: the maximally entangled pair $S_{AB}$ can{\it not} be entangled with any other system $S_C$. This heuristic argument can be made rigorous in terms of the compatibility and independence structure of questions for $N\geq3$ and can be extended to the non-extremal case using informational {\it monogamy inequalities} \cite{Hoehn:2014uua}.

\subsection{A logical explanation for the three-dimensionality of the Bloch ball}\label{sec_3d}

A key result of the reconstruction, proven in \cite{Hoehn:2014uua}, is the following. Since its proof is instructive and representative for this approach, we shall rephrase it here.
\begin{theorem}\label{thm_3d}
$D_1=2$ or $3$.
\end{theorem}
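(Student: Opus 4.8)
The plan is to bound $D_1$ from above using the complementarity inequalities together with the combinatorial structure of question graphs for $N=2$ established in Lemma~\ref{lem1}, and then rule out $D_1 \geq 4$ by exhibiting a contradiction with the information limit of Rule~\ref{lim}. First I would recall that $\cq_{M_1}=\{Q_1,\dots,Q_{D_1}\}$ is a maximal mutually complementary set with $D_1\geq 2$, and that for the bipartite system $S_2$ the correlation questions $Q_{ij}=Q_i\leftrightarrow Q_j'$ together with the individual questions form the building blocks of an informationally complete set. The key structural input is part~2 of Lemma~\ref{lem1}: $Q_{ij}$ and $Q_{kl}$ are compatible iff $i\neq k$ \emph{and} $j\neq l$; equivalently, $Q_{ij}$ and $Q_{kl}$ are complementary as soon as they share a row index or a column index.

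Next I would analyze a carefully chosen family of correlation questions and apply the complementarity inequality \Ref{compstrong}. Concretely, fix the column index and consider the set $\{Q_{1j},Q_{2j},\dots,Q_{D_1 j}\}$: by the lemma these are pairwise complementary (they all share column $j$), so they form a mutually complementary set and \Ref{compstrong} forces $\sum_{i=1}^{D_1}\alpha(y_{ij})\leq 1\,\texttt{bit}$ in every state. The same holds along any row. Now I would try to saturate the $N=2$ bit bound of Rule~\ref{lim} using a ``diagonal'' set of correlation questions $\{Q_{11},Q_{22},\dots\}$, which by the lemma are mutually compatible and independent, hence can be simultaneously known. If $D_1\geq 4$ one can choose two disjoint pairs of indices, e.g. consider $Q_{11},Q_{22}$ known with certainty, and then ask whether a further correlation question (or an individual question) consistent with the compatibility graph can carry independent information; the graph-theoretic count of how many mutually compatible, pairwise independent correlation questions exist, versus the ceiling of $N=2$ independent bits plus the constraint that the $\alpha$-values along each row and column sum to at most one bit, over-determines the state and collapses the available information below what a genuine $D_1$-dimensional complementary set would require. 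Pushing this bookkeeping through should show $D_1\leq 3$, while $D_1=2,3$ both survive (indeed $D_1=2$ is rebit, $D_1=3$ is qubit).

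The main obstacle, I expect, is making the counting argument airtight rather than merely suggestive: one must show not just that \emph{some} configurations are forbidden but that \emph{every} putative $D_1\geq 4$ assignment of question vectors compatible with \Ref{ansatz}, with the row/column complementarity inequalities, and with the existence for each $Q$ of a state $\vec y_Q$ carrying exactly one independent bit, is inconsistent. This likely requires translating the combinatorial graph data into linear-algebraic constraints on the question vectors $\vec q_{ij}\in\mathbb{R}^{D_2}$ — orthogonality/non-orthogonality relations forced by compatibility versus complementarity — and then showing a dimension count fails for $D_1\geq 4$. A secondary subtlety is that at this stage the information measure $\alpha$ has not yet been pinned down, so the argument must use only the qualitative properties of $\alpha$ (vanishing exactly at $y=1/2$, equal to one bit at $y\in\{0,1\}$, and the inequalities \Ref{compstrong}) rather than its explicit form; keeping the proof independent of the later determination $\alpha(y)=1-H(y)$-type formula is essential to avoid circularity.
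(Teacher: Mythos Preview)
Your proposal misses the central mechanism of the argument and heads down a harder, less conclusive path. The paper's proof does not use the complementarity inequalities \Ref{compstrong} at all, nor any linear-algebraic constraints on question vectors. Instead it is a short \emph{logical} argument: the $D_1$ diagonal correlation questions $Q_{11},\ldots,Q_{D_1D_1}$ are mutually compatible (via Lemma~\ref{lem1} and Specker's principle), so $O$ may know all of them simultaneously; but by Rule~\ref{lim} only $N=2$ of them can be mutually independent. Hence for $D_1>2$ the answers to any two, say $Q_{11},Q_{22}$, must \emph{logically determine} each remaining $Q_{jj}$. The truth-table analysis of \Ref{truth} then shows that the only binary connectives of $Q_{11},Q_{22}$ preserving pairwise independence are $\leftrightarrow$ and its negation, so every $Q_{jj}$ with $j\geq3$ equals $Q_{11}\leftrightarrow Q_{22}$ or $\neg(Q_{11}\leftrightarrow Q_{22})$. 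For $D_1>3$ two of these would coincide, contradicting pairwise independence of $\cq_{M_2}$.

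Your route via row/column complementarity inequalities does not obviously close: those inequalities constrain information along mutually \emph{complementary} sets, whereas the diagonal $\{Q_{ii}\}$ is mutually \emph{compatible}, so knowing all $D_1$ diagonal answers does not by itself violate any instance of \Ref{compstrong}. You would need to manufacture a contradiction by interlacing several row/column inequalities with the $2$-\texttt{bit} cap, and as you yourself note this bookkeeping is only ``suggestive''; it is not clear it can be made rigorous without the explicit form of $\alpha$, which is derived only later. The step you are missing is to trade the information bound directly for \emph{logical dependence} among compatible questions, and then to exploit the very limited supply of independence-preserving binary connectives. That replaces your anticipated dimension count with a two-line combinatorial finish and avoids any reference to $\alpha$, question vectors, or \Ref{compstrong}.
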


\begin{proof}
Consider the $N=2$ case. Lemma \ref{lem1} implies that any maximal set of pairwise compatible correlation questions has $D_1$ elements. Indeed, there are maximally $D_1$ non-intersecting edges between the $D_1$ vertices of system 1 and the $D_1$ vertices of system 2; e.g., the $D_1$ `diagonal' $Q_{ii}$
\begin{eqnarray}
\psfrag{d}{$\vdots$}
\psfrag{Q11}{$Q_{11}$}
\psfrag{Q22}{$Q_{22}$}
\psfrag{QDD}{$Q_{D_1D_1}$}
\psfrag{Q33}{$Q_{33}$}
{\includegraphics[scale=.2]{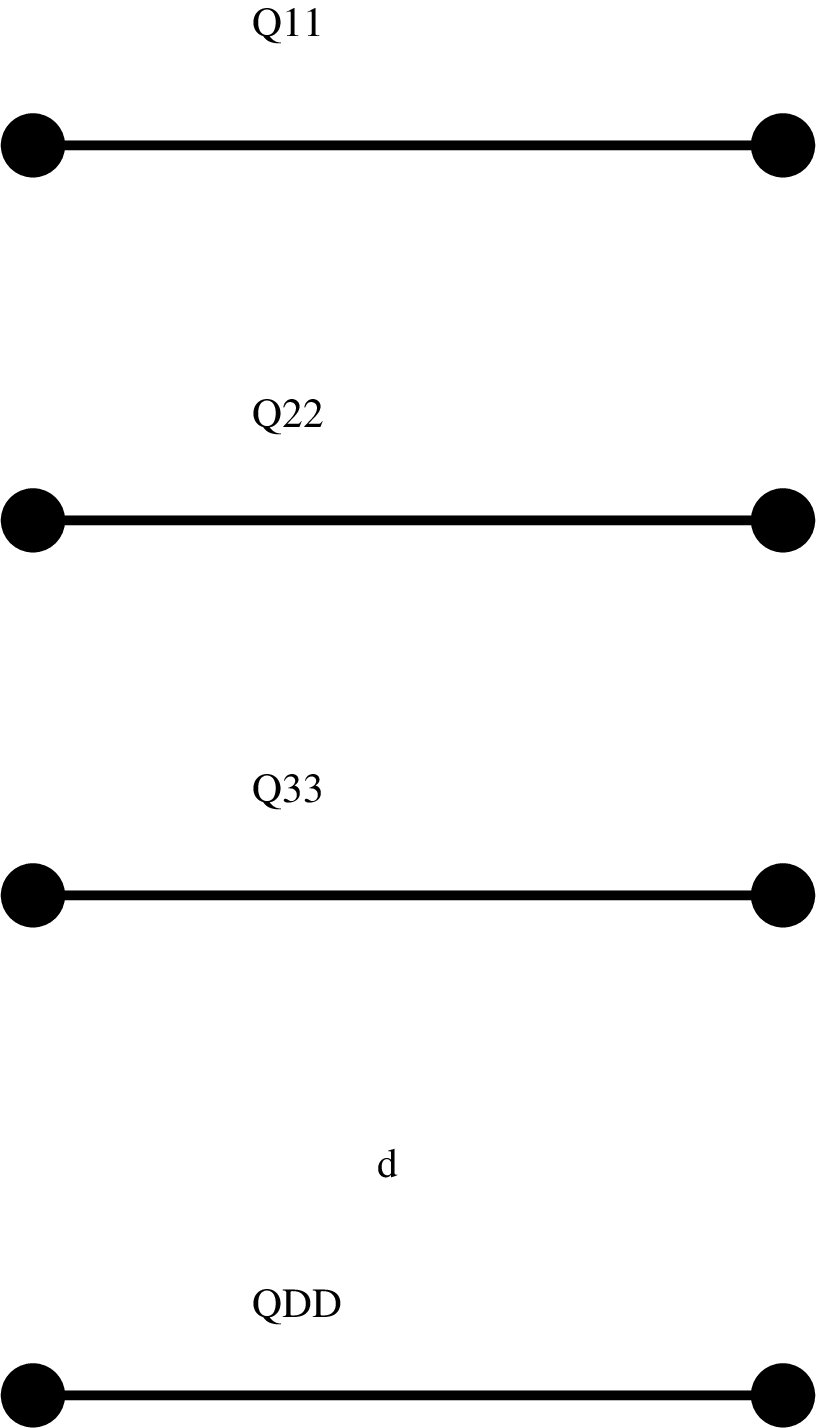}}\notag
\end{eqnarray}   
are pairwise independent and compatible. The constraints on the posterior state update rule in section \ref{sec_indep} entail that they are also mutually compatible (Specker's principle) \cite{Hoehn:2014uua} such that $O$ may simultaneously know the answers to all $D_1$ $Q_{ii}$. Since $O$ may not know more than $N=2$ independent \texttt{bits} (rule \ref{lim}), the $D_1$ $Q_{ii}$ can{\it not} be mutually independent if $D_1>2$. Thus, assuming the $Q_{ii}$ are of equivalent status, the answers to any pair of them, say $Q_{11},Q_{22}$, must imply the answers to all others, say $Q_{ii}$, $i=3,\ldots,D_1$. Hence, $Q_{jj}=Q_{11}*Q_{22}$, $j\neq1,2$, for a connective $*$ that preserves pairwise independence of $Q_{11},Q_{22},Q_{jj}$. Reasoning as in (\ref{truth}) implies that either
\ba
Q_{jj}=Q_{11}\leftrightarrow Q_{22},\q\q\q\text{or}\q\q\q Q_{jj}=\neg(Q_{11}\leftrightarrow Q_{22}),\q\q\q j=3,\ldots,D_1\label{3d}
\ea
so that for $D_1>3$ $Q_{jj}$, $j=3,\ldots,D_1$, can{\it not} be pairwise independent. Arguing identically for all other sets of $D_1$ pairwise independent and compatible $Q_{ij}$, we conclude that $D_1\leq3$.
\end{proof}

This theorem has several crucial repercussions. We may already suggestively call $D_1=2$ and $D_1=3$ the `rebit' (two-level systems over {\it real} Hilbert spaces) and `qubit' case, respectively. Reasoning as in (\ref{3d}) shows that the $Q_{ij}$ are logically closed under $\leftrightarrow$; as demonstrated in \cite{Hoehn:2014uua}:
\begin{theorem}\label{thm_qubit}
If $D_1=3$ then $\cq_{M_2}:=\{Q_i,Q'_j,Q_{ij}\}_{i,j=1,2,3}$ is logically closed under $\leftrightarrow$ and thus constitutes an informationally complete set for $N=2$ with $D_2=15$. 

If $D_1=2$ then $\cq_{M_2}=\{Q_i,Q_j',Q_{ij},Q_{11}\leftrightarrow Q_{22}\}_{i,j=1,2}$ is logically closed under $\leftrightarrow$ and thus constitutes an informationally complete set for $N=2$ with $D_2=9$. Furthermore, $Q_{11}\leftrightarrow Q_{22}$ is complementary to the individual questions $Q_i,Q'_j$, $i,j=1,2$.
\end{theorem}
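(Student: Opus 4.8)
The plan is to show, for both values of $D_1$, that every XNOR of two members of the candidate set $\cq_{M_2}$ is again a member (logical closure), then count the elements that survive the identifications forced by closure, and finally check the claimed complementarity in the $D_1=2$ case. Throughout I will exploit the truth table \eqref{truth} and the compatibility/complementarity dictionary of Lemma \ref{lem1}: XNOR is only admissible between compatible questions, and the result of XNOR-ing two compatible questions is pinned down (up to the global $\neg$ convention) by \eqref{truth}.

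First I would organize the closure check by the \emph{types} of the two factors. (i) $Q_i\leftrightarrow Q_j$ and $Q_i'\leftrightarrow Q_j'$ are disallowed for $i\neq j$ (same-system individual questions are complementary) and trivial for $i=j$, so they contribute nothing new. (ii) $Q_i\leftrightarrow Q_j'$ is exactly the correlation question $Q_{ij}$ already in the set, by the very definition in sec.~\ref{sec_connect}. (iii) $Q_i\leftrightarrow Q_{kl}$ is admissible only when $i=k$ (Lemma \ref{lem1}(1)); reasoning as in \eqref{truth} on the compatible pair $Q_i,Q_{il}$, the truth table forces $Q_i\leftrightarrow Q_{il}$ to be a question whose answer equals $Q_l'$ whenever $Q_i$ is known, and a short case analysis identifies it with $Q_l'$ (or its negation), hence nothing new — and symmetrically for $Q_j'\leftrightarrow Q_{kl}$. (iv) The substantive case is $Q_{ij}\leftrightarrow Q_{kl}$ with $i\neq k$, $j\neq l$ (the only compatible configuration, Lemma \ref{lem1}(2)). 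Here the same argument used to derive \eqref{3d} in Theorem \ref{thm_3d} applies verbatim: pairwise independence of the three correlation questions forces the XNOR to again be a correlation question, necessarily $Q_{mn}$ with $m\notin\{i,k\}$ and $n\notin\{j,l\}$ (the unique `complementary diagonal' completion). For $D_1=3$ there is exactly one such $(m,n)$ for each pair, so $\cq_{M_2}=\{Q_i,Q_j',Q_{ij}\}$ is closed; for $D_1=2$ there is \emph{no} index left for $m$ or $n$ among the correlation questions, so $Q_{11}\leftrightarrow Q_{22}$ (and likewise $Q_{12}\leftrightarrow Q_{21}$, which \eqref{3d}-type reasoning equates to it up to $\neg$) is a genuinely new question — call it $Q_c$ — and one must then also check that XNOR-ing $Q_c$ with anything already present closes up, which again follows from \eqref{truth} applied to whichever pairs are compatible.

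Next I would do the counting. For $D_1=3$: three $Q_i$, three $Q_j'$, nine $Q_{ij}$, total $15$, matching $D_2=15$ (and consistent with the $4^2-1$ dimension of two-qubit state space). For $D_1=2$: two $Q_i$, two $Q_j'$, four $Q_{ij}$, plus the one extra $Q_c$, total $9=3^2$, matching $D_2=9$ (the two-rebit count). In both cases informational completeness then follows because the set is maximal pairwise-independent with the right cardinality $D_N$, invoking the result from sec.~\ref{sec_indep}/\cite{Hoehn:2014uua} that any such maximal set is informationally complete.

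Finally, for the complementarity claim when $D_1=2$, I would argue as follows: by Lemma \ref{lem1}(1) the individual question $Q_1$ is compatible with $Q_{11}$ and $Q_{12}$ but complementary to $Q_{21},Q_{22}$; since $Q_c=Q_{11}\leftrightarrow Q_{22}$ encodes a joint property of a complementary pair (one member, $Q_{22}$, being complementary to $Q_1$), any state with $Q_1\in\{0,1\}$ must randomize $Q_{22}$, hence randomize the XNOR, giving posterior probability $\tfrac12$ for $Q_c$ — i.e. $Q_c$ is complementary to $Q_1$, and symmetrically to $Q_2,Q_1',Q_2'$. The main obstacle I anticipate is case (iii)–(iv) bookkeeping: making fully rigorous that the truth-table argument of \eqref{truth}/\eqref{3d} does not merely constrain the XNOR to be \emph{some} question but actually identifies it with the specific claimed element, which requires using independence of the relevant triple together with the posterior-update constraints of sec.~\ref{sec_indep} (Specker's principle) to rule out spurious alternatives — essentially the same delicate step that powers Theorem \ref{thm_3d}.
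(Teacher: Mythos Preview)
Your overall strategy---case-splitting on the types of the two XNOR factors and invoking the \eqref{3d}-style dependence argument for the crucial correlation--correlation case---is exactly the route the paper sketches (``Reasoning as in \eqref{3d} shows that the $Q_{ij}$ are logically closed under $\leftrightarrow$''; the full details are deferred to \cite{Hoehn:2014uua}). Cases (i)--(iii) are fine, and in (iv) your identification of $Q_{ij}\leftrightarrow Q_{kl}$ with the unique remaining $Q_{mn}$ (up to $\neg$) for $D_1=3$ is precisely the content of the Theorem~\ref{thm_3d} argument applied to non-diagonal compatible triples. The counting is also correct.

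There is, however, a genuine gap in your complementarity argument for $D_1=2$. You write that a state with $Q_1\in\{0,1\}$ ``must randomize $Q_{22}$, hence randomize the XNOR''. The second clause does not follow from the first: $Q_{22}$ being maximally uncertain does \emph{not} force $Q_c=Q_{11}\leftrightarrow Q_{22}$ to be uncertain, because the \emph{correlation} between $Q_{11}$ and $Q_{22}$ can be perfectly definite even when each factor individually has probability $\tfrac12$ (this is exactly what happens in an entangled state). So the inference ``$y_{22}=\tfrac12\Rightarrow y_c=\tfrac12$'' is invalid as stated. A correct argument has to go through the information bound of Rule~\ref{lim} rather than through marginals: assume for contradiction that $Q_1$ and $Q_c$ are compatible, pick a state with both known, and then show that asking a further independent compatible individual question (e.g.\ $Q_1'$, which is compatible with $Q_1$ and, by pairwise independence of $\cq_{M_2}$, independent of both) would yield a third independent bit---contradicting the $N=2$ limit---unless $Q_1'$ is complementary to $Q_c$; then invoke the symmetry of $Q_c=Q_{11}\leftrightarrow Q_{22}=\neg(Q_{12}\leftrightarrow Q_{21})$ under exchange of systems and of indices to conclude that \emph{all} individuals are complementary to $Q_c$. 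Your acknowledged ``bookkeeping'' obstacle in (iii)--(iv) is by contrast not the real difficulty; the identification there follows cleanly from associativity of $\leftrightarrow$ on compatible triples (e.g.\ $Q_i\leftrightarrow Q_{il}=Q_i\leftrightarrow(Q_i\leftrightarrow Q_l')=Q_l'$) once compatibility is granted by Lemma~\ref{lem1}.
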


Indeed, $D_2=9,15$ are the correct numbers of degrees of freedom for $N=2$ rebits and qubits, respectively. However, since the composite question $Q_{11}\leftrightarrow Q_{22}$ is complementary to {\it all} individual questions in the rebit case (this is {\it not} true in the qubit case!), it is impossible for $O$ to do ensemble state tomography by asking only individual questions $Q_i,Q'_j$, thereby violating rule \ref{loc}. We are left with the qubit case and shall henceforth ignore rebits (for rebits see \cite{Hoehn:2014uua}).

\subsection{Ruling out local hidden variables and the correlation structure for $N=2$}

Using (\ref{3d}) and repeating the argument leading to it for `non-diagonal' $Q_{ij}$ shows that either
\ba
Q_{11}\leftrightarrow Q_{22}=Q_{12}\leftrightarrow Q_{21},\q\q\q\text{or}\q\q\q Q_{11}\leftrightarrow Q_{22}=\neg(Q_{12}\leftrightarrow Q_{21}).\label{qbit6}
\ea
The first case (without relative negation) is the case of {\it classical} logic and compatible with {\it local} hidden variables for the individual questions $Q_i,Q'_j$. Namely, note that $Q_{11}\leftrightarrow Q_{22}=Q_{12}\leftrightarrow Q_{21}$ can be rewritten in terms of the individuals as
\ba
(Q_1\leftrightarrow Q_1')\leftrightarrow(Q_2\leftrightarrow Q_2')=(Q_1\leftrightarrow Q_2')\leftrightarrow(Q_2\leftrightarrow Q_1').\label{noclassfo}
\ea
Suppose for a moment that $Q_1,Q'_1,Q_2,Q'_2$ had simultaneous definite values (although not accessible to $O$). It is easy to convince oneself that any distribution of simultaneous truth values over the $Q_i,Q'_j$ satisfies (\ref{noclassfo}) \cite{Hoehn:2014uua}. In fact, (\ref{noclassfo}) is a {\it classical logical identity} and can be argued to follow from classical rules of inference \cite{Hoehn:2014uua}. However, it involves complementary individual questions, thereby violating our premise from section \ref{sec_comp} that $O$ may apply classical rules of inference exclusively to mutually compatible questions. This classical case is thus ruled out.

One can check that the second case, $Q_{11}\leftrightarrow Q_{22}=\neg(Q_{12}\leftrightarrow Q_{21})$, does {\it not} admit a {\it local} hidden variable interpretation, but is consistent with the structure of the theory landscape and rules \cite{Hoehn:2014uua}. Since one of the two cases (\ref{qbit6}) {\it must} be true, we conclude that this second case holds. In fact, for {\it any} complementary pairs $Q,Q'$ and $Q'',Q'''$ such that both $Q$ and $Q'$ are compatible with both $Q'',Q'''$, one finds similarly \cite{Hoehn:2014uua}
\ba
(Q\leftrightarrow Q'')\leftrightarrow (Q'\leftrightarrow Q''')=\neg\left((Q\leftrightarrow Q''')\leftrightarrow(Q'\leftrightarrow Q'')\right).\label{qbit8}
\ea 
This precludes to reason classically about the distribution of truth values over $O$'s questions.

(\ref{qbit8}) permits us to unravel the complete correlation structure for $\cq_{M_2}$. In fact, it turns out that there are two distinct representations of this correlation structure: one corresponding to quantum theory in its standard representation, the other to its `mirror' representation, related by a {\it passive} (not a physical) transformation, reassigning $Q_1\mapsto\neg Q_1$ (in quantum theory tantamount to a partial transpose on qubit 1) \cite{Hoehn:2014uua}. The two distinct representations turn out to be physically equivalent and so a convention has to be made. Choosing the `standard' case and using (\ref{qbit8}), one finds that the compatibility and correlation structure of $\cq_{M_2}$ can be represented graphically as in fig.~\ref{fig_corr}. 
For $Q,Q',Q''$ compatible, we shall henceforth distinguish between
\begin{description}
\item[even correlation:] if $Q=Q'\leftrightarrow Q''$, and
\item[odd correlation:] if $Q=\neg(Q'\leftrightarrow Q'')$.
\end{description}
\begin{figure}[hbt!]
\begin{center}
\psfrag{+}{\scriptsize$+$}
\psfrag{-}{\scriptsize\hspace*{-.1cm}$-$}
\psfrag{1}{\scriptsize$Q_1$}
\psfrag{2}{\scriptsize\hspace*{-.1cm}$Q_2$}
\psfrag{3}{\scriptsize$Q_3$}
\psfrag{1p}{\scriptsize$Q'_1$}
\psfrag{2p}{\scriptsize\hspace*{-.1cm}$Q'_2$}
\psfrag{3p}{\scriptsize$Q'_3$}
\psfrag{11}{\scriptsize$Q_{11}$}
\psfrag{22}{\scriptsize$Q_{22}$}
\psfrag{12}{\scriptsize\hspace{-.1cm}$Q_{12}$}
\psfrag{33}{\scriptsize$Q_{33}$}
\psfrag{13}{\scriptsize$Q_{13}$}
\psfrag{21}{\scriptsize$Q_{21}$}
\psfrag{23}{\scriptsize$Q_{23}$}
\psfrag{31}{\scriptsize\hspace{-.1cm}$Q_{31}$}
\psfrag{32}{\scriptsize$Q_{32}$}
\psfrag{i}{\hspace*{-.45cm}\tiny identify}
{\includegraphics[scale=.2]{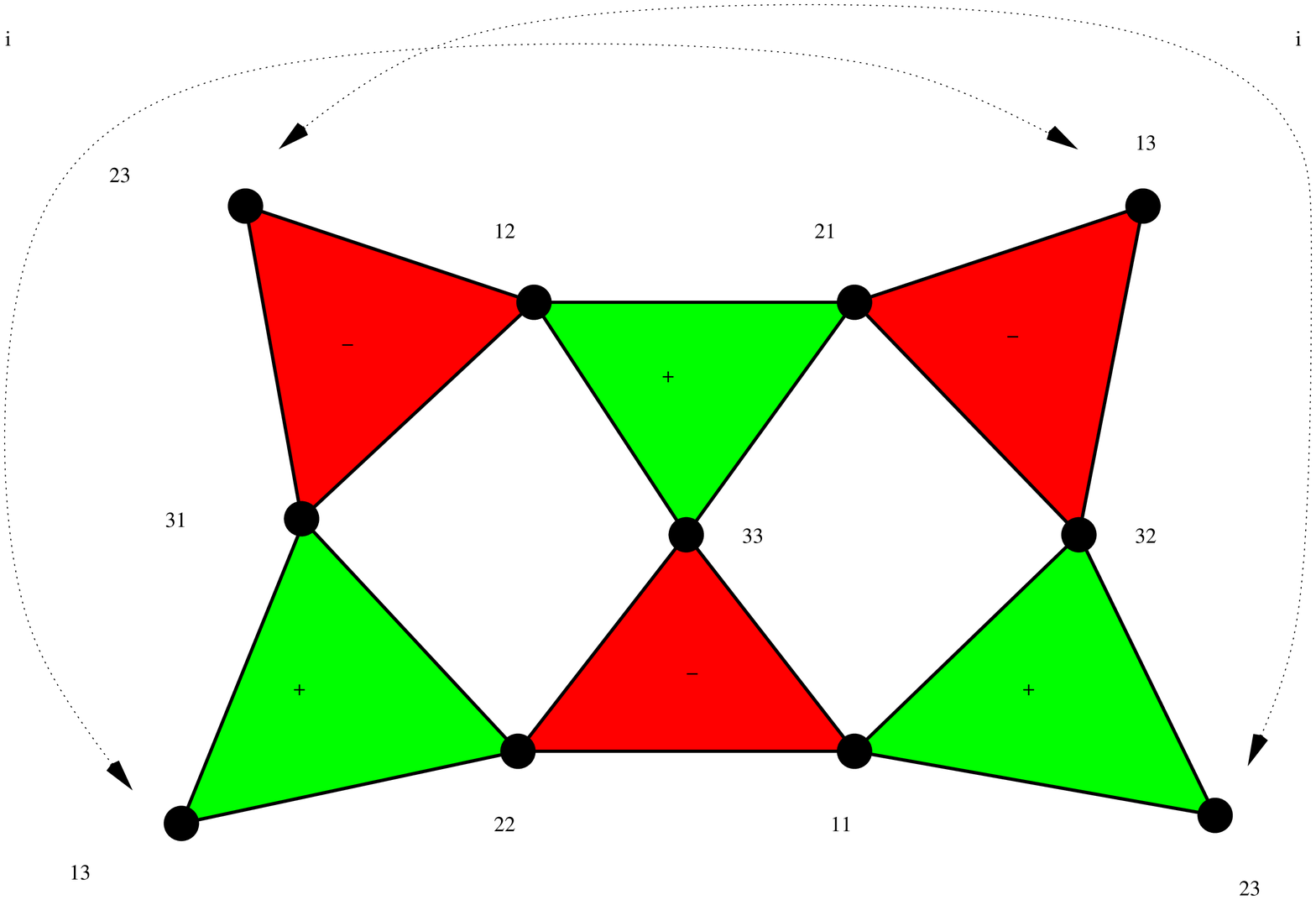}}\q\q\q
{\includegraphics[scale=.2]{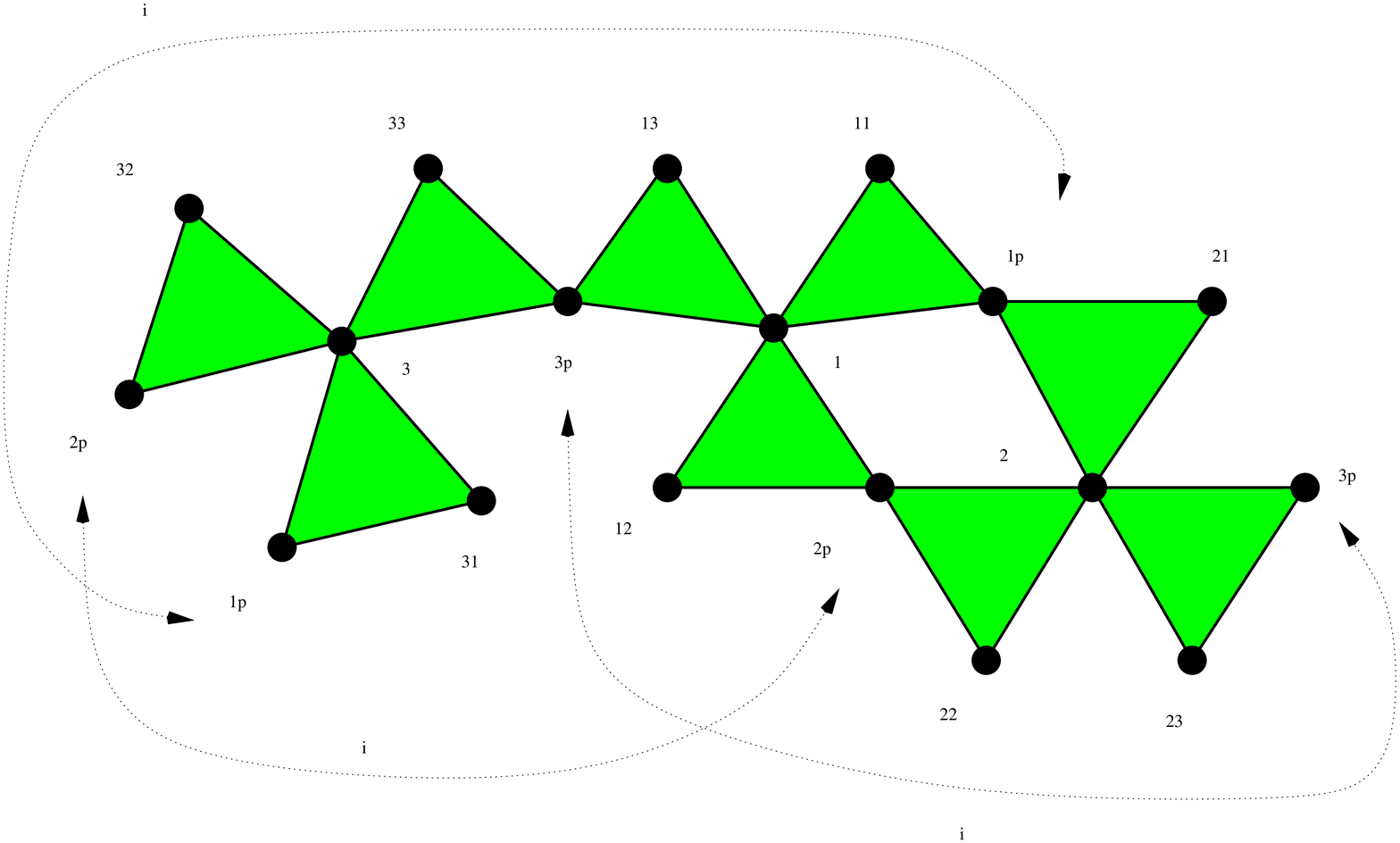}}
\caption{\small The compatibility and correlation structure of the informationally complete set $\cq_{M_2}$ for the $N=2$ qubit case. Two questions are compatible if connected by a triangle edge and complementary otherwise. Red and green triangles denote odd and even correlation, respectively; e.g., $Q_{33}=\neg(Q_{11}\leftrightarrow Q_{22})=Q_{12}\leftrightarrow Q_{21}$. (Taken from \cite{Hoehn:2014uua}.)}\label{fig_corr}
\end{center}
\end{figure}

One can easily check that quantum theory satisfies this correlation structure for projective spin measurements if one replaces $i=1,2,3$ by $x,y,z$. For instance, $Q_{11}=Q_{22}=$ `yes' implies, by fig.~\ref{fig_corr}, the dependent $Q_{33}=$ `no'. In quantum theory, this corresponds to the (unnormalized) Bell state with spin correlation in $x$- and $y$-direction and anti-correlated spins in $z$-direction
\ba
|x_+x_+\rangle-|x_-x_-\rangle=-i|y_+y_+\rangle+i|y_-y_-\rangle=|z_+z_-\rangle+|z_-z_+\rangle.\nn
\ea

\subsection{Compatibility, independence and informational completeness for arbitrary $N$}

Consider $N$ elementary systems in the `qubit' ($D_1=3$) case and the XNOR conjunction 
\ba
 Q_{\mu_1\mu_2\cdots \mu_N}:=Q_{\mu_1}\leftrightarrow Q_{\mu_2}\leftrightarrow\cdots\leftrightarrow Q_{\mu_N}\label{ngbit}
 \ea
of individual questions, where $\mu_a=0,1,2,3$ and $Q_{0}:=$`yes'. The conjunction yields `yes' and `no' if an even and odd number of $Q_{\mu_a}=$ `no', respectively, and thus does {\it not} represent ``are the answers to all $Q_{\mu_a}$ the same?.'' As shown in \cite{Hoehn:2014uua}, these conjunctions are informationally complete:

\begin{theorem} {\bf(Qubits)}\label{thm_nqbit}
The $4^N-1$ questions\footnote{We deduct the trivial question $Q_{000\cdots000}$.} $Q_{\mu_1\cdots\mu_N}$, $\mu=0,1,2,3$, are pairwise independent and logically closed under $\leftrightarrow$ and thus form an informationally complete set $\cq_{M_N}$ with $D_N=4^N-1$. Moreover, $Q_{\mu_1\cdots\mu_N}$ and $Q_{\nu_1\cdots\nu_N}$ are
compatible if they differ by an {\bf even} number (incl.\ $0$) of non-zero indices and complementary otherwise.
\end{theorem}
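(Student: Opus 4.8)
The plan is to bootstrap from the $N=2$ results (Theorems 3 and 4, Lemma 1) via induction on $N$, exploiting the recursive definition (\ref{composite}) of composite systems. First I would set up the induction by treating $S_N$ as the bipartite composite $S_{N-1}\otimes S_1$, where the inductive hypothesis gives that the $4^{N-1}-1$ questions $Q_{\mu_1\cdots\mu_{N-1}}$ form an informationally complete set $\cq_{M_{N-1}}$ with the stated compatibility rule, and where system $N$ carries the single-system set $\{Q_1',Q_2',Q_3'\}$ of three mutually complementary questions ($D_1=3$). By the composition rule, every $Q_{\mu_1\cdots\mu_{N-1}}$ is compatible with every $Q_{\mu_N}'$, so the XNOR conjunctions $Q_{\mu_1\cdots\mu_N}:=Q_{\mu_1\cdots\mu_{N-1}}\leftrightarrow Q_{\mu_N}'$ are well-defined composite questions; when $\mu_N=0$ this just recovers $\cq_{M_{N-1}}$, and the genuinely new questions are those with $\mu_N\in\{1,2,3\}$.

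Next I would establish \emph{pairwise independence}. For two questions $Q_{\mu_1\cdots\mu_N}$ and $Q_{\nu_1\cdots\nu_N}$: if they agree in the last index, independence reduces (via the argument behind the truth table (\ref{truth}) and Lemma 1, applied to the bipartition) to independence of the corresponding $(N-1)$-system questions, which is the inductive hypothesis; if they differ in the last index, one uses that $Q'_{\mu_N}$ and $Q'_{\nu_N}$ are complementary on $S_1$ together with the independence-preserving property of the XNOR (the reasoning of (\ref{truth})) to conclude independence. The base case $N=2$ is exactly Lemma 1 plus Theorem 4. Then I would show the set is \emph{logically closed under $\leftrightarrow$}: since $(Q_{\mu_1\cdots\mu_{N-1}}\leftrightarrow Q'_{\mu_N})\leftrightarrow(Q_{\nu_1\cdots\nu_{N-1}}\leftrightarrow Q'_{\nu_N})$ rearranges (XNOR is associative and commutative) to $(Q_{\mu_1\cdots\mu_{N-1}}\leftrightarrow Q_{\nu_1\cdots\nu_{N-1}})\leftrightarrow(Q'_{\mu_N}\leftrightarrow Q'_{\nu_N})$, closure follows from closure of $\cq_{M_{N-1}}$ under $\leftrightarrow$ (inductive hypothesis) together with the single-system closure relations (\ref{3d}), i.e. $Q'_j\leftrightarrow Q'_k = \neg Q'_l$ for $\{j,k,l\}=\{1,2,3\}$ and $Q'_j\leftrightarrow Q'_j=$`yes'$=Q'_0$. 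Counting: the indices range over $\mu_a\in\{0,1,2,3\}$, giving $4^N$ strings, minus the trivial all-zero string, so $D_N=4^N-1$; combined with pairwise independence and the assumption that maximal pairwise-independent sets are informationally complete, this identifies $\cq_{M_N}$.

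Finally, the \emph{compatibility criterion}: I claim $Q_{\mu_1\cdots\mu_N}$ and $Q_{\nu_1\cdots\nu_N}$ are compatible iff they differ in an even number of positions (counting a position as "differing" when $\mu_a\neq\nu_a$, including the $0$ vs. nonzero cases). For the bipartition, each question is a conjunction of its $S_{N-1}$-part and its $S_1$-part; two such conjunctions of mutually compatible building blocks are compatible iff \emph{both} pieces are individually compatible OR both are individually complementary — this is the key combinatorial fact, provable from the correlation structure of fig.~\ref{fig_corr} and (\ref{qbit8}) (complementarity "squares" to compatibility). By the inductive hypothesis the $S_{N-1}$-parts are compatible iff $(\mu_1,\dots,\mu_{N-1})$ and $(\nu_1,\dots,\nu_{N-1})$ differ in an even number of places; by Lemma 1 / single-system structure the $S_1$-parts $Q'_{\mu_N},Q'_{\nu_N}$ are compatible iff $\mu_N=\nu_N$ (differ in $0$ places) and complementary iff $\mu_N\neq\nu_N$. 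Combining: the full questions are compatible iff (even, same) or (odd, different), i.e. iff the total number of differing positions is even. I expect the main obstacle to be rigorously justifying this "both-compatible-or-both-complementary" tensor-type rule for XNOR conjunctions across the bipartition at general $N$ — the $N=2$ instance is encoded in fig.~\ref{fig_corr} and (\ref{qbit8}), and the inductive step requires propagating it through nested conjunctions while tracking the parity of relative negations, which is where the bookkeeping (and the appeal to the detailed results of \cite{Hoehn:2014uua}) is genuinely needed.
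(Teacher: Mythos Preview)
Your inductive bipartition $S_N=S_{N-1}\cup S_1$ is the natural route (and is indeed how \cite{Hoehn:2014uua} proceeds), but two steps are genuinely wrong as written.

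First, you have misread the compatibility criterion. The phrase ``differ by an even number of \emph{non-zero} indices'' means: count only those positions $a$ where \emph{both} $\mu_a$ and $\nu_a$ are nonzero and $\mu_a\neq\nu_a$. Under your reading (count all positions with $\mu_a\neq\nu_a$, including $0$ vs.\ nonzero), $Q_{i0}$ and $Q_{ij}$ would differ in one place and be complementary, contradicting Lemma~\ref{lem1}(1). Relatedly, your claim that the $S_1$-parts $Q'_{\mu_N},Q'_{\nu_N}$ are compatible iff $\mu_N=\nu_N$ forgets that $Q'_0=$`yes' is compatible with every $Q'_j$. The correct single-system rule is: complementary iff both indices are nonzero and distinct, compatible otherwise. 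Your final parity argument must be redone with this three-way case split ($\mu_N=\nu_N$; exactly one of them zero; both nonzero and distinct).

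Second, your closure argument is invalid. You rearrange the fourfold XNOR into $(Q_{\mu_1\cdots\mu_{N-1}}\leftrightarrow Q_{\nu_1\cdots\nu_{N-1}})\leftrightarrow(Q'_{\mu_N}\leftrightarrow Q'_{\nu_N})$ and then appeal to a relation ``$Q'_j\leftrightarrow Q'_k=\neg Q'_l$''. But XNOR is only operationally defined between \emph{compatible} questions (sec.~\ref{sec_comp}); for distinct nonzero $j,k$ the individuals $Q'_j,Q'_k$ are complementary, so $Q'_j\leftrightarrow Q'_k$ is not an admissible expression. Equation~(\ref{3d}) concerns the compatible \emph{correlation} questions $Q_{11},Q_{22},Q_{33}$, not the complementary individuals. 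Thus precisely in the ``both-complementary'' branch that you yourself flag as delicate, the rearranged expression is meaningless. The argument in \cite{Hoehn:2014uua} does not proceed by free Boolean rearrangement; it first establishes the compatibility structure and then uses the correlation relations (fig.~\ref{fig_corr}, eq.~(\ref{qbit8})) case by case to show that the XNOR of two compatible $N$-questions is, up to a tracked negation, again of the form (\ref{ngbit}).
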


We note that an $N$-qubit density matrix has precisely $4^N-1$ degrees of freedom.

\subsection{Linear, reversible time evolution and a quadratic information measure}\label{sec_lrtime}

Thus far, the summarized results invoked only rules \ref{lim} and \ref{unlim} (and in one instance rule \ref{loc}). Rules \ref{pres} and \ref{time}, on the other hand, can be demonstrated to entail a {\it linear} and {\it reversible} evolution of the generalized Bloch vector $\mathbb{R}^{4^N-1}\ni\vec{r}=2\,\vec{y}-\vec{1}$ that already appeared in (\ref{ansatz}),
\ba
\vec{r}(\Delta t+t_0)=T(\Delta t)\,\vec{r}(t_0),\label{linear}
\ea
where $T(\Delta t)\subset\ct_N$ defines a one-parameter matrix group \cite{Hoehn:2014uua}. Suppose $T(\Delta t),T'(\Delta t')\in\ct_N$ correspond to two distinct interactions to which $O$ may subject $S_N$. By rule \ref{time}, $T(\Delta t)\cdot T'(\Delta t')$ must likewise be contained in $\ct_N$ and since both $T,T'$ are invertible, also the entire set $\ct_N$ must be a group. We shall henceforth often represent states with Bloch vectors $\vec{r}$.

Rules \ref{pres} and \ref{time}, together with elementary operational conditions on the information measure, enforce it to be quadratic $\alpha(y_i)=(2\,y_i-1)^2$ so that $O$'s total information (\ref{infmeas})
\ba
I_N(\vec{y})=\sum_{i=1}^{4^N-1}(2\,y_i-1)^2=|\vec{r}|^2\label{infomeasure}
\ea 
 is simply the square norm of the Bloch vector \cite{Hoehn:2014uua}. Interestingly, this derivation would not work without the {\it continuity} of time evolution (rule \ref{time}). Crucially, (\ref{infomeasure}) is {\it not} the Shannon entropy (see \cite{Hoehn:2014uua} for a discussion about why the Shannon entropy is also conceptually not suitable for quantifying $O$'s information). This reconstruction thereby corroborates an earlier proposal for a quadratic information measure for quantum theory by Brukner and Zeilinger \cite{Brukner:2002kx,Brukner:vn,Brukner:ys}.
 
This quadratic information measure becomes key for the remaining steps of the reconstruction. Given that (\ref{infomeasure}) is a `conserved charge' of time evolution (rule \ref{pres}), we can already infer that $\ct_N\subset\rm{SO}(4^N-1)$ because time evolution must be connected to the identity.

\subsection{Pure and mixed states}

Suppose $O$ knows $S_N$'s answers to $N$ mutually compatible questions from $\cq_{M_N}$, thereby saturating the information limit of $N$ {\it independent} \texttt{bits} (rule \ref{lim}). He will then also know the answers to each of their bipartite, tripartite, ..., and $N$-partite XNOR conjunctions which, by theorem \ref{thm_nqbit}, are also in $\cq_{M_N}$ (and compatible). In total, he then knows the answers to 
\ba
\binom{N}{1}+\binom{N}{2}+\cdots\binom{N}{N}=\sum_{i=1}^{N}\,\binom{N}{i}=2^N-1\nn
\ea
questions from $\cq_{M_N}$. Thus, $O$'s total information (\ref{infomeasure}) is $2^N-1$ \texttt{bits} in this case. It contains {\it dependent} \texttt{bits} of information because the questions in $\cq_{M_N}$ are pairwise, but not all mutually independent. Thanks to rule \ref{pres}, this is invariant under time evolution.

This allows us to distinguish two kinds of states \cite{Hoehn:2014uua}; $\vec{y}$ is called a
\begin{description}
\item {\bf pure state:} if it is a state of maximal information, and hence of maximal length
\ba
I_{N}(\vec{y})=\sum_{i=1}^{4^N-1}\,(2\,y_i-1)^2=(2^N-1)\, \texttt{bits},\label{pure}
\ea
\item {\bf mixed state:} if it is a state of non-maximal information, 
\ba
0\,\texttt{bit}\leq I_{N}(\vec{y})=\sum_{i=1}^{4^N-1}\,(2\,y_i-1)^2<(2^N-1)\, \texttt{bits}. \label{mixed}
\ea
\end{description}
The square length of the Bloch vector thus corresponds to the number of answered questions. The state of no information $\vec{y}=\f{1}{2}\,\vec{1}$ has length $0$ \texttt{bits}.

As can be easily checked, quantum theory satisfies this characterization. In particular, an $N$-qubit density matrix, corresponding to a pure state, has a Bloch vector with square norm equal to $2^N-1$. This peculiar fact now has a clear informational interpretation.

\subsection{The Bloch ball and unitary group for a qubit from a conserved informational charge}

Since $D_1=3$ (cf.\ sec.~\ref{sec_3d}), we have that $\cq_{M_1}=\{Q_1,Q_2,Q_3\}$ is a maximal set of mutually complementary questions, i.e., no further $Q\in\cq_1$ can be added to $\cq_{M_1}$ without destroying mutual complementarity in the set (cf.\ sec.~\ref{sec_connect}). According to (\ref{pure}), a pure state satisfies
\ba
I_{N=1}(\vec{y})=r_1^2+r_2^2+r_3^2=(2\,y_1-1)^2+(2\,y_2-1)^2+(2\,y_3-1)^2=1\,\texttt{bit}\label{n1charge}.
\ea
For later, we thus observe: {\it for pure states, the maximal mutually complementary set carries \emph{exactly} $1$ \texttt{bit} of information and this is a conserved charge of time evolution (rule \ref{pres}).}
 
Rule \ref{lim} implies that, e.g., the pure state $\vec{y}_*=(1,0,0)$ exists in $\Sigma_1$ and we know $\ct_1\subset\rm{SO}(3)$. But it is clear that applying {\it any} $T\in\rm{SO}(3)$ to $\vec{y}_*$, according to (\ref{linear}), yields only states that are also compatible with all rules \ref{lim}--\ref{pres} (and the landscape). Hence, by rule \ref{time} we must actually have $\ct_1=\rm{SO}(3)\simeq\rm{PSU}(2)$. Clearly, $\ct_1$ then generates {\it all} quantum pure states from $\vec{y}_*$, i.e., it yields the entire Bloch sphere (the image of any legal state under a legal time evolution is also a legal state). Recalling that $\Sigma_1$ is convex, we obtain that $\Sigma_1=B^3\simeq\text{convex hull of }\mathbb{CP}^1$ is the entire unit Bloch ball with mixed states (\ref{mixed}) lying inside; the completely mixed state equals the state of no information at the center. $\Sigma_1,\ct_1$ coincide exactly with the set of density matrices $\rho=\f{1}{2}(\mathds{1}+\vec{r}\cdot\vec{\sigma})$ and the set of unitary transformations $\rho\mapsto U\,\rho\,U^\dag$, $U\in\rm{SU}(2)$, respectively, for a single qubit in its {\it adjoint} (i.e., Bloch vector) representation, where $\vec{\sigma}=(\sigma_1,\sigma_2,\sigma_3)$ is the vector of Pauli matrices. Finally, from the assumptions in sec.~\ref{sec_prob} and rule \ref{Q} it is also clear that $\cq_1=\{\vec{q}\in\mathbb{R}^3\,|\,|\vec{q}|^2=1\,\texttt{bit}\}\simeq\mathbb{CP}^1$. This coincides with the set of projectors $P_{\vec{q}}=\f{1}{2}(\mathds{1}+\vec{q}\cdot\vec{\sigma})$ onto the $+1$ eigenspaces of the Pauli operators $\vec{q}\cdot\vec{\sigma}$. Noting that 
\ba
\Tr(\rho\,P_{\vec{q}})=\f{1}{2}(1+\vec{r}\cdot\vec{q})\equiv Y(Q|\vec{y})\label{bornn1}
\ea
we also recover that (\ref{ansatz}) yields the Born rule for projective measurements. We thus have the claim of sec.~\ref{sec_axioms} for $N=1$ (for details see \cite{Hoehn:2014uua,hw}).

\subsection{Unitary group and density matrices for two qubits from conserved informational charges}

Also for $N=2$ it is rewarding to consider maximal mutually complementary sets within $\cq_{M_2}$. Using lemma \ref{lem1}, one can check that there are exactly {\it six} maximal complementarity sets containing five questions and {\it twenty} containing three \cite{hw}; e.g., two graphical representatives are:
\ba
\psfrag{x1}{\tiny $Q_{1}$}
\psfrag{y1}{\tiny $Q_{2}$}
\psfrag{z1}{\tiny\hspace{-.13cm} $Q_{3}$}
\psfrag{x2}{\tiny\hspace{-.1cm} $Q'_{1}$}
\psfrag{y2}{\tiny $Q'_{2}$}
\psfrag{z2}{\tiny $Q'_{3}$}
\psfrag{xx}{\tiny $Q_{11}$}
\psfrag{xy}{\tiny $Q_{12}$}
\psfrag{xz}{\tiny\hspace{-.23cm} $Q_{13}$}
\psfrag{yx}{\tiny $Q_{21}$}
\psfrag{yy}{\tiny $Q_{22}$}
\psfrag{yz}{\tiny $Q_{23}$}
\psfrag{zx}{\tiny $Q_{31}$}
\psfrag{zy}{\tiny $Q_{32}$}
\psfrag{zz}{\tiny $Q_{33}$}
\includegraphics[scale=0.25]{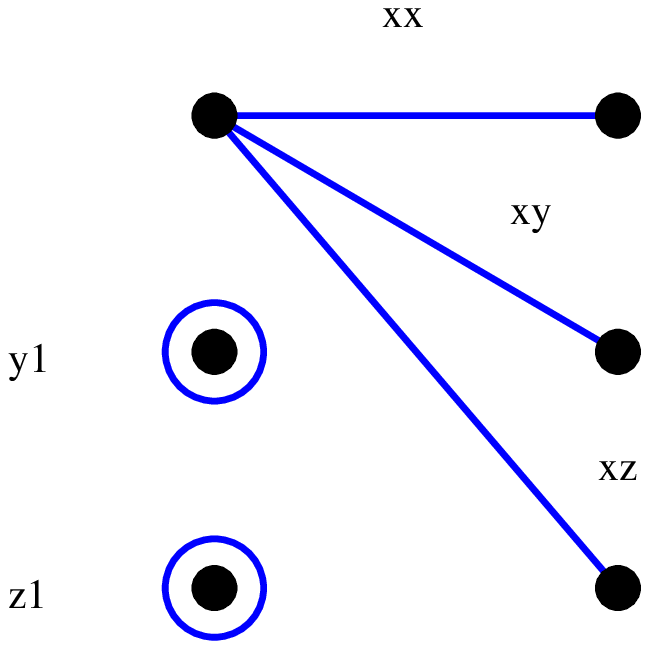} \q\q\q\q\q\q\q\q\q\q\q\q\q\q\q\q\includegraphics[scale=0.25]{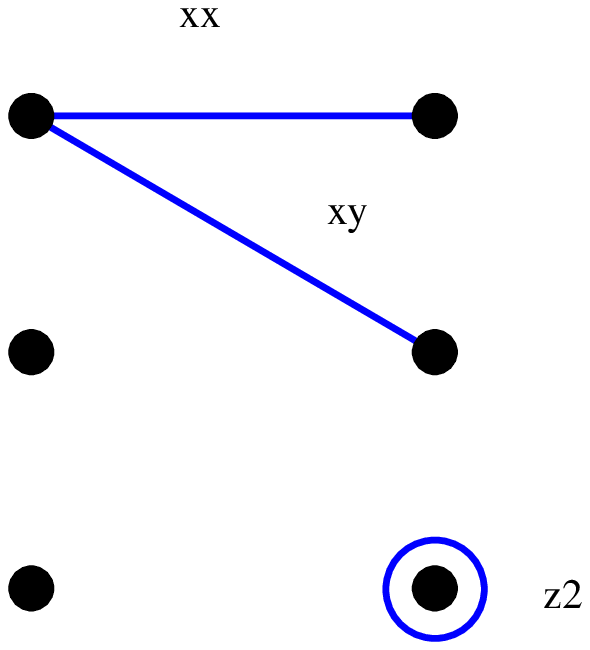} \q\q\q\nn\\\nn\\
  \text{Pent}_1=\{Q_{11},Q_{12},Q_{13},Q_{2},Q_{3}\}, \q\q\q\q\q\q\q\text{Tri}_1=\{Q_{11},Q_{12},Q'_3\}\,. \nn
\ea
The six maximal complementarity sets of five elements can be represented as a lattice of pentagons, see fig.~\ref{fig:pentagon} (which also contains four green triangles, each representing one of the twenty maximal complementarity sets of three questions) \cite{hw}.
\begin{SCfigure}
\psfrag{x1}{\tiny $Q_{1}$}
\psfrag{y1}{\tiny $Q_{2}$}
\psfrag{z1}{\tiny\hspace{-.13cm} $Q_{3}$}
\psfrag{x2}{\tiny\hspace{-.1cm} $Q'_{1}$}
\psfrag{y2}{\tiny $Q'_{2}$}
\psfrag{z2}{\tiny $Q'_{3}$}
\psfrag{xx}{\tiny $Q_{11}$}
\psfrag{xy}{\tiny $Q_{12}$}
\psfrag{xz}{\tiny\hspace{-.23cm} $Q_{13}$}
\psfrag{yx}{\tiny $Q_{21}$}
\psfrag{yy}{\tiny $Q_{22}$}
\psfrag{yz}{\tiny $Q_{23}$}
\psfrag{zx}{\tiny $Q_{31}$}
\psfrag{zy}{\tiny $Q_{32}$}
\psfrag{zz}{\tiny $Q_{33}$}
\psfrag{p1}{\tiny $1$}
\psfrag{p2}{\tiny $2$}
\psfrag{p3}{\tiny $3$}
\psfrag{p4}{\tiny $4$}
\psfrag{p5}{\tiny $5$}
\psfrag{p6}{\tiny $6$}
\hspace*{-1cm}\includegraphics[scale=0.22]{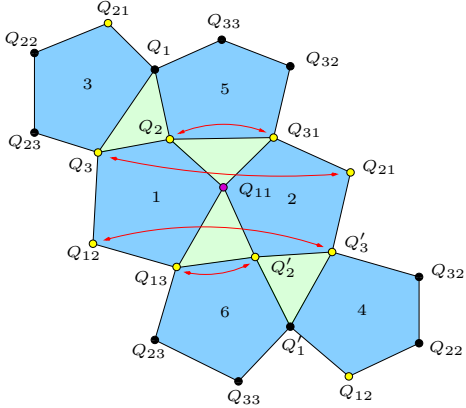}
\hspace*{1cm}\caption{\footnotesize The six maximal complementarity sets represented as pentagons. Two questions are complementary if they share a pentagon or are connected by an edge and compatible otherwise. Every pentagon is connected to all other five because any $Q\in\cq_{M_2}$ is contained in precisely two pentagons. The red arrows represent the information swap (\ref{eq:pentswapq}) between pentagons 1 and 2 that preserves all pentagon equalities (\ref{pentin}) and defines the time evolution generator (\ref{eq:pentswapT}). (Figure adapted from \cite{hw}.)
} \label{fig:pentagon}
 \end{SCfigure}

Each of these sets has to satisfy the complementarity inequalities (\ref{compstrong}); specifically $0\,\text{\texttt{bits}}\leq I(\text{Pent}_a):=\sum_{i\in\text{Pent}_a}r_i^2\leq 1\,\text{\texttt{bit}}$ for the information carried by the five questions in pentagon $a$. Since any $Q\in\cq_{M_2}$ is contained in precisely two pentagons (cf.\ fig.~\ref{fig:pentagon}) we find
\ba
\sum_{a=1}^6\,I(\text{Pent}_a)=2\left(\sum_{i=1,2,3}(r_{i_1}^2+r_{i_2}^2)+\sum_{i,j=1,2,3}r_{ij}^2\right)=2\,I_{{N=2}}(\vec{r}).\label{totinfo}
\ea
Noting that for pure states $I_{{N=2}}(\vec{r}_{\rm pure})=3$ \texttt{bits} thus produces the {\it pentagon equalities} \cite{hw}
\ba
\text{\bf pure states:}\q\q\q\q I(\text{Pent}_a)\equiv1\,\text{\texttt{bit}},\q\q a=1,\ldots,6.\label{pentin}
\ea
Any pure state {\it must} satisfy (\ref{pentin}) and $\ct_2$ evolves pure states to pure states (rule \ref{pres}). Hence, in analogy to $N=1$: {\it for pure states, these six maximal mutually complementary sets carry \emph{exactly} $1$ \texttt{bit} of information and these are six conserved charges of time evolution}. There are further interesting constraints on the distribution of $O$'s information over $\cq_{M_2}$ \cite{hw}.

It can be straightforwardly checked that quantum theory actually satisfies (\ref{pentin}). Indeed, in the case of quantum theory the identity for $\text{Pent}_1$ reads in more familiar language (pure states)
\ba
 I(\text{Pent}_1) = \langle \sigma_2\otimes\mathds{1}\rangle^2+\langle \sigma_3\otimes\mathds{1}\rangle^2+\langle \sigma_1\otimes\sigma_1\rangle^2+\langle\sigma_1\otimes\sigma_2\rangle^2+\langle\sigma_1\otimes\sigma_3\rangle^2=1,\nn
\ea
etc. Remarkably, these identities of quantum theory seem not to have been reported before in the literature. These novel conserved informational charges are a prediction of our reconstruction, underscoring the benefits of taking this informational approach. And these informational charges are indispensable for deriving the unitary group and the state space as we shall now see.

Using that $I(\text{Pent}_a(\vec{r}))$ is conserved under $\ct_2\subset\rm{SO}(15)$ entails (with new index $i=1,\ldots,15$)
\ba
 \sum_{i\in\text{Pent}_a,1\leq j\leq 15}r_i\,G_{ij}\,r_j=0,\q\q\q\q a=1,\ldots6, \label{eq:pentGeq}
\ea
where $T(\Delta t)=\exp(\Delta t G)$ for $G\in\fs\fo(15)$ \cite{hw}. The correlation structure  of fig.~\ref{fig_corr} enforces \cite{hw}
\ba
G_{ij}=0,\q\q\q\text{whenever $Q_i,Q_j$ are compatible.}\label{gij}
\ea
Each of the $15$ $Q_i\in\cq_{M_2}$ is complementary to eight others and since $G_{ij}=-G_{ji}$, there could be maximally $60$ linearly independent $G_{ij}$ of $\ct_2$. 

These are constructed as follows. For {\it every} pair of pentagons there is a unique information swap transformation which preserves (\ref{pentin}). For instance, the red arrows in fig.~\ref{fig:pentagon} represent the complete information swap between pentagons $\text{Pent}_1$ and $\text{Pent}_2$ ($\longleftrightarrow$ is {\it not} the XNOR)
\begin{equation}
r^2_{2}\longleftrightarrow r_{31}^2 \ (\text{Pent}_5), \ r_3^2\longleftrightarrow r_{21}^2 \ (\text{Pent}_3), \ r_{12}^2\longleftrightarrow {r'_3}^2 \ (\text{Pent}_4), \ r_{13}^2\longleftrightarrow r_2'^2 \ (\text{Pent}_6) \label{eq:pentswapq}
\end{equation}
that keeps all other components fixed. (\ref{pentin}) are preserved because every swap in (\ref{eq:pentswapq}) occurs within a pentagon. The correlation structure of fig.~\ref{fig_corr} fixes the corresponding generator to \cite{hw}
\begin{gather}
G_{ij}^{\text{Pent}_1,\text{Pent}_2}=\delta_{i2}\delta_{j(31)}-\delta_{i3}\delta_{j(21)}+\delta_{i(12)}\delta_{j3'}-\delta_{i(13)}\delta_{j2'}-(i\longleftrightarrow j). \label{eq:pentswapT}
\end{gather}
One can repeat the argument for all $15$ pentagon pairs, producing $15$ linearly independent generators \cite{hw}. Remarkably, they turn out to coincide exactly with the adjoint representation of the $15$ fundamental generators of $\rm{SU}(4)$ \cite{hw}. In particular, (\ref{eq:pentswapT}) is the generator of entangling unitaries leaving $r_{11}$ invariant.
The other $45$ independent generators satisfying (\ref{gij}) are ruled out by the correlation structure so that $\ct_2$ cannot be generated by anything else than these $15$ pentagon swaps \cite{hw}. One can show that the exponentiation of (linear combinations of) these $15$ pentagon swaps generates $\rm{PSU}(4)$ and that this group abides by all rules and forms a maximal subgroup of $\rm{SO}(15)$ \cite{hw}. Rule \ref{time} then implies $\ct_2\simeq\rm{PSU}(4)$ which is the correct set of unitary transformations $\rho\mapsto U\,\rho\,U^\dag$, $U\in\rm{SU}(4)$, for two qubits.

It turns out that the set of Bloch vectors satisfying all six pentagon equalities (\ref{pentin}) and the conservation equations (\ref{eq:pentGeq}) for the $15$ pentagon swaps splits into two sets on each of which $\ct_2=\rm{PSU}(4)$ acts transitively \cite{hw}. These two sets correspond precisely to the two possible conventions of building up composite questions either using the XNOR or XOR (cf.\ sec.~\ref{sec_connect}) and are therefore physically equivalent. Adhering to the XNOR convention, we conclude that the surviving set of Bloch vectors solving (\ref{pentin}, \ref{eq:pentGeq}) is the set of $N=2$ states admitted by the rules. Indeed, it coincides exactly with the set of quantum pure states which forms a $\mathbb{CP}^3$  of which $\rm{PSU}(4)$ is the isometry group \cite{hw}. Employing convexity of $\Sigma_2$, one finally finds
\ba
\Sigma_2= \text{closed convex hull of } \mathbb{C}\mathbb{P}^3,\nn
\ea
which is exactly the set of normalized $4\times4$ density matrices over $\mathbb{C}^2\otimes\mathbb{C}^2$. 

Concluding, the new conserved informational charges (\ref{pentin}), in analogy to (\ref{n1charge}) for $N=1$, define both the unitary group and set of states for two qubits. (For neglected details, see \cite{hw}.)

\subsection{Unitaries and states for $N>2$ elementary systems}

According to theorem \ref{thm_nqbit}, $\Sigma_N$ is $(4^N-1)$-dimensional and $\ct_N\subset\rm{SO}(4^N-1)$ (cf.\ sec.~\ref{sec_lrtime}). The reconstruction of the unitary group uses a {\it universality} result from quantum computation: two-qubit unitaries $\rm{PSU}(4)$ (between any pair) and single-qubit unitaries $\rm{PSU}(2)\simeq\rm{SO}(3)$ generate the full projective unitary group $\rm{PSU}(2^N)$ for $N$ qubits \cite{bremner2002practical,Harrow:2008aa}. Given that $S_N$ is a composite system, all of these bipartite and local unitaries must be in $\ct_N$. One can check that $\rm{PSU}(2^N)$ again abides by all rules and constitutes a maximal subgroup of $\rm{SO}(4^N-1)$ \cite{hw}. Thanks to rule \ref{time}, this yields $\ct_N\simeq\rm{PSU}(2^N)$ which coincides with the set of unitary transformations on $N$-qubit density matrices. In analogy to the previous case, one obtains as the state space
\ba
\Sigma_N= \text{closed convex hull of } \mathbb{C}\mathbb{P}^{2^N-1},\nn
\ea
which agrees with the set of normalized $N$-qubit density matrices. (For details, see \cite{hw}.)

\subsection{Questions as projective measurements and the Born rule}

The assumptions in sec.~\ref{sec_prob} and rule \ref{Q} yield the following question set characterization \cite{hw}:
\ba
\cq_N\simeq\{\vec{q}\in\mathbb{R}^{4^N-1}\,|\, Y(\vec{q}|\vec{r})\in[0,1]\,\forall\,\vec{r}\in\Sigma_N\q\text{and}\q\vec{q}\q\text{is a $1$ \texttt{bit} quantum state}\}.\label{Qcharacter}
\ea
As shown in \cite{hw}, this set is ismorphic to the set of projectors $P_{\vec{q}}=\f{1}{2}(\mathds{1}+\vec{q}\cdot\vec{\sigma})$ onto the $+1$ eigenspaces of the Pauli operators $\vec{q}\cdot\vec{\sigma}=\sum_{\mu_1\cdots\mu_N}\,q_{\mu_1\cdots\mu_N}\,\sigma_{\mu_1\cdots\mu_N}$, where $\sigma_{\mu_1\cdots\mu_N}=\sigma_{\mu_1}\otimes\cdots\otimes\sigma_{\mu_N}$ and $\sigma_0=\mathds{1}$. Noting that $q_{\mu_1\cdots\mu_N}$ corresponds to (\ref{ngbit}) reveals that the XNOR at the question level corresponds to the tensor product $\otimes$ at the operator level. One also finds that (\ref{bornn1}) again holds such that (\ref{ansatz}) yields the Born rule for projective measurements for arbitrary $N$. (For the neglected details and many further interesting properties of $\cq_{N}$, we refer to \cite{hw}.)

\subsection{The von Neumann evolution equation}

We thus obtain qubit quantum theory in its adjoint (i.e.\ Bloch vector) representation. Lastly, we note that $\vec{r}(t)=T(t)\,\vec{r}(0)$ with $T(t)=e^{t\,G}\in\rm{PSU}(2^N)$ is equivalent to the adjoint action 
\ba
\rho(t)=U(t)\,\rho(0)\,U^\dag(t),\label{unitary}
\ea 
of $U(t)=e^{-i\,H\,t}\in\SU(2^N)$ for some hermitian operator $H$ on $\mathbb{C}^{2^N}$, where $\rho(t)=\f{1}{2^N}\left(\mathds{1}+\vec{r}(t)\cdot\vec{\sigma}\right)$ \cite{hw}. (\ref{unitary}), in turn, is equivalent to $\rho(t)$ solving the von Neumann evolution equation 
\ba
i\f{\p\,\rho}{\p t}=[H,\rho].
\ea
We have therefore also recovered the correct time evolution equation for quantum states.

\section{Conclusions}\label{sec_conc}

We have reviewed and summarised the key steps from \cite{Hoehn:2014uua,hw} necessary to prove the claim of sec.~\ref{sec_axioms}. This yields a reconstruction of the explicit formalism of qubit quantum theory from rules constraining an observer's acquisition of information about a system \cite{Hoehn:2014uua,hw}. The derivation corroborates the consistency of interpreting the state as the observer's `catalogue of knowledge' and shows that it is sufficient to speak only about the information accessible to him for reproducing quantum theory. In fact, for qubits, this derivation accomplishes an informational reconstruction of the type proposed in Rovelli's {\it relational quantum mechanics} \cite{Rovelli:1995fv} and in the Brukner-Zeilinger informational interpretation of quantum theory \cite{zeilinger1999foundational,Brukner:2002kx}.

%
%
%

As a key benefit, this reconstruction also provides a novel informational explanation for the architecture of qubit quantum theory. In particular, it explains the logical structure of a basis of spin measurements, the dimensionality and structure of quantum state spaces, the correlation structure and the unitarity of time evolution from the perspective of information acquisition. This unravels previously unknown structural properties: conserved `informational charges' from complementarity relations define and explain the unitary group and the set of pure states.

\ack

The author thanks C.\ Wever for an enjoyable collaboration on \cite{hw}. The project leading to this publication has received funding from the European Union's Horizon 2020 research
and innovation programme under the Marie Sklodowska-Curie grant agreement No 657661.

\section*{References}

\end{document}